\documentclass{article}
\usepackage{graphicx,amsmath,amssymb,amsthm} 
\newtheorem{theorem}{Theorem}

\newtheorem{lemma}[theorem]{Lemma}

\newtheorem{remark}{Remark}
\newtheorem{example}{Example}
\newtheorem{definition}{Definition}
\title{Arithmetic Supports of Lax Difference Hierarchies}
\author{Sylvain Carpentier \thanks{SC is supported by BK21 Seoul National University.}}

\begin{document}

\maketitle

\begin{abstract}
    We classify monic finite-band scalar difference operators with independent coefficients admitting infinitely many support-preserving flows. We prove that such operators are completely characterized by an arithmetic condition on their support: the exponents must form an arithmetic progression. Conversely, every arithmetic support gives rise to an infinite hierarchy of local Lax flows. As a consequence, finite-band scalar Lax hierarchies with independent coefficients are classified by three integers $(N,p,m)$, corresponding respectively to the leading order, the common difference of the support, and the number of generators. This framework recovers several classical systems, including the Toda, Volterra, Narita--Itoh--Bogoyavlensky, and Blaszak-Marciniak lattices, while simultaneously producing infinitely many additional examples. In particular, the support $(-1,1,m)$ yields a scalar difference Lax representation of the Beffa-Wang hierarchy, and its Belov-Chaltikian reduction in the case $m=2$. 
\end{abstract}
\section{Introduction}
The theory of integrable differential-difference equations has produced a rich variety of examples, including the Toda lattice , the Volterra lattice, the Narita–Itoh–
Bogoyavlensky hierarchy, the Blaszak-Marciniak hierarchy, and many of their generalizations. A common feature of these systems is the existence of a scalar difference Lax representation of the form

\begin{equation}\label{eq}
L={\mathcal{S}}^N+\sum_{i=1}^{M}u_i{\mathcal{S}}^{N_i},
\end{equation}
where $\mathcal{S}$ denotes the shift operator and $N>N_1>\cdots>N_M$. A systematic algebraic study of difference Lax operators and their associated hierarchies was developed by Kupershmidt \cite{K85}.
\\
\indent A central problem in the theory of integrable differential-difference equations is that of classification. Symmetry-based approaches have led to important classification results for classes of scalar equations \cite{Y06}. More generally, integrability is often accompanied by strong algebraic constraints arising from symmetry and reduction \cite{M80}.
In this paper we investigate a complementary problem. Rather than classifying integrable differential-difference equations with a small number of dependent variables, we ask \textit{which finite-band scalar difference operators admit infinitely many   Lax flows.} More precisely, we ask for which $L$ there exist infinitely many linearly independent auxiliary operators $A$ such that the commutator $[A,L]$ preserves the support of $L$.\\
\indent
At first sight, there seems to be no reason for the support of $L$ to satisfy any particular arithmetic structure. One might expect sporadic supports to admit isolated families of local symmetries. Our main result shows that this is not the case:  the existence of infinitely many linearly independent auxiliary operators $A$ such that $supp([A,L]) \subset supp(L)$ forces the support of $L$ to be arithmetic. Moreover, the standard hierarchy generated by powers of $L$ emerges automatically from this weak assumption.

\begin{theorem} 
Suppose that \begin{enumerate}
    \item $N>0$ and $N_M<0$ or $N<0$
    \item the coefficients $u_i$ are free generators of a commutative algebra $\mathcal{A}$ of difference polynomials
    \item there exists an infinite sequence of linearly independent operators $A_n \in \mathcal{A}[\mathcal{S},\mathcal{S}^{-1}]$ such that

\[
supp([A_n,L]) \subset supp(L).
\]
\end{enumerate}
Then the support of $L$ is arithmetic 
\[ supp(L)=\{N,N-p,N-2p,..., N-mp \}. \]
Moreover, if $q=p/gcd(N,p)$, the hierarchy associated to $L$ is 
\[ \partial_k(L)=[L^{\epsilon qk}_+,L], \, k \geq 1\]
where $\epsilon$ denotes the sign of $N$.
\end{theorem}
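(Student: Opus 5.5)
The plan is to work in the algebra of formal pseudo-difference operators $\mathcal{A}((\mathcal{S}^{-\epsilon}))$ and compare an arbitrary $A_n$ with fractional powers of $L$. Assume $N>0$; the case $N<0$ follows by the involution $\mathcal{S}\mapsto\mathcal{S}^{-1}$, which exchanges the roles of $N$ and $-N$ and explains the sign $\epsilon$.

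\textbf{Step 1: reduction to the centralizer.} Since $L$ is monic, it has an $N$-th root $L^{1/N}=\mathcal{S}+\sum_{j\ge 0} a_j\mathcal{S}^{-j}$ with $a_j\in\mathcal{A}$ (after adjoining finitely many shifts and roots if needed). Given $A=A_n$ of top degree $d$, write $A=cL^{d/N}+\dots$ and use the support condition degree by degree. The top coefficient of $[A,L]$ in degree $d+N-1$ is $c\,\mathcal{S}^d\cdot(\mathcal{S}^N-1)$-type, and it must vanish when $d+N-1>N$ lies outside the support. This forces $(\mathcal{S}^N-1)c=0$, so $c$ is a constant because $\mathcal{A}$ is freely generated. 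Iterating the argument shows that $A\equiv\sum_k c_k L^{k/N}$ modulo terms whose commutator is controlled.

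Condition 1 (a negative exponent $N_M<0$) is essential here: it says $L$ also has a root in $\mathcal{S}^{-1}$ from the bottom. Running the same argument from the lowest degree pins $A$ from below as well, so $A$ is a genuinely two-sided object. The conclusion I aim for is that $[A,L]=[(\sum c_kL^{k/N})_+,L]$ with finitely many $k$, and the infinitude of linearly independent $A_n$ produces infinitely many admissible $k$.

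\textbf{Step 2: the arithmetic constraint.} For each admissible $k$, the flow $[L^{k/N}_+,L]=-[L^{k/N}_-,L]$ has support contained in $[N_M,\,N-1]$. Its degree-$(N-j)$ coefficients are explicit difference polynomials in the $u_i$, with linear part $(\mathcal{S}^{k}-1)$-type operators applied to the $u_i$. Freeness of the generators means that no cancellation between different $u_i$ can occur. So $\mathrm{supp}([L^{k/N}_+,L])\subset\mathrm{supp}(L)$ becomes a combinatorial statement.

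Let $D=\{N-N_i\}$ be the set of gaps. Products of $u$'s in $L^{k/N}$ live in degrees $k-\sum(\text{gaps})$, and the commutator can produce any degree $N-g$ with $g$ in the additive semigroup generated by $D$, up to the span $N-N_M$. Requiring all these degrees to lie in the support, and checking that each required monomial appears with nonzero coefficient (the main obstacle), forces $D$ to be closed under addition up to $N-N_M$. Once $D$ is additively closed in this range, it must be $\{p,2p,\dots,mp\}$ with $p=\min D$: otherwise the smallest element not a multiple of $p$ combines with $p$ to give a forbidden degree.

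\textbf{Step 3: which $k$ occur.} With the support arithmetic, $L=\mathcal{S}^{N}f(\mathcal{S}^{-p})$ is invariant under conjugation by $p$-th roots of unity acting on degrees. One checks that $L^{k/N}$ has degrees in $k-p\mathbb{Z}_{\ge 0}$. Moreover $[L^{k/N}_+,L]$ has support in $N-p\mathbb{Z}$ exactly when $k\equiv 0 \pmod{p}$ modulo the lattice generated by $N$. This says $k/N\in\tfrac{1}{N}(p\mathbb{Z}+N\mathbb{Z})$, i.e. $k/N$ is a multiple of $g/N$ with $g=\gcd(N,p)$; the flows then generated are by powers $L^{qk}$ with $q=p/g$.

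I expect the nonvanishing of the relevant coefficients in Step 2 to be the main difficulty. I would try to isolate, for a candidate bad gap, the monomial linear or quadratic in the two lowest relevant $u_i$ and compute its coefficient as a nonzero difference operator using freeness.
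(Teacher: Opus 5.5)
The decisive gap is in Step 2. The criterion you reduce to, that the gap set $D=\{N-N_i\}$ be closed under addition up to $N-N_M$, does not imply arithmeticity. For $L=\mathcal{S}+u_1\mathcal{S}^{-1}+u_2\mathcal{S}^{-2}$ one has $D=\{2,3\}$, every sum of two gaps exceeds $3$, and $p=\gcd D=1\neq\min D$, so your closing step (combining the smallest non-multiple of $p$ with $p$) yields nothing in range. The mechanism you describe is also not the operative one: for $kN>mp$ every term of $L^k_-$ is a product of at least two coefficients, so the admissible flows have no linear part at all.

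The paper's argument uses large powers instead. Since $L^{qk}$ is a series in $\mathcal{S}^p$ whose monomials all enter with positive coefficients, the coefficient $F_j(k)$ of $\mathcal{S}^{-jp}$ is nonzero as soon as $qkN+jp$ is a sum of at most $qk$ gaps. This holds for all large $k$ because the semigroup generated by $D/p$ contains all large integers and $N-mp<0$ (Lemma 8). The coefficient of $\mathcal{S}^{N-p}$ in $[L^{qk}_-,L]$ is then exactly $(1-\mathcal{S}^N)F_1(k)\neq 0$. This forces $N-p\in\mathrm{supp}(L)$ and already excludes $D=\{2,3\}$. The remaining degrees $N-jp$ need precisely the nonvanishing argument you call the main obstacle and leave open. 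The paper specializes to $\mathcal{S}^N+u_1\mathcal{S}^{N-p}+u_M\mathcal{S}^{N-mp}$ and uses a shift filtration and a sign argument when $N\ge p$, and a separate first-gap argument when $N<p$.

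Steps 1 and 3 have further problems.
\begin{itemize}
\item The root $L^{1/N}$ does not exist over $\mathcal{A}$. For $L=\mathcal{S}^2+u_1\mathcal{S}+u_2+u_3\mathcal{S}^{-1}$ its first coefficient would have to solve $(1+\mathcal{S})(a_0)=u_1$, which has no solution in difference polynomials. Adjoining such solutions needs (in general infinitely many) new elements, and you would then have to re-prove in the extension that $\mathcal{S}^N$-invariants are constants.
\item There is no ``root from below'' either, since $u_M$ is not invertible. The paper handles the negative part by solving $a\,\mathcal{S}^k(u_M)=u_M\mathcal{S}^{N_M}(a)$ through divisibility (Lemma 5).
\item Minor point: the top term of $[c\mathcal{S}^d,L]$ is $(1-\mathcal{S}^N)(c)\,\mathcal{S}^{d+N}$. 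Degree $d+N-1$ belongs to the differential picture.
\item Ruling out non-integral exponents is a genuine step. In Lemma 6, a term $b\,\mathcal{S}^nL^I$ with $0<n<N$ produces $(\mathcal{S}^n-1)(u_1)\notin\mathrm{Im}(\mathcal{S}^N-1)$, so only $\mathbb{C}[L]$ survives.
\item Your Step 3 condition $k/N\in\frac{g}{N}\mathbb{Z}$ contradicts your own conclusion $L^{qk}$. It already fails for $L=\mathcal{S}+u\mathcal{S}^{-1}$ ($N=1$, $p=2$), where $[L_+,L]=(\mathcal{S}-1)(u)$ lies in degree $0\notin\{1,-1\}$. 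There is no reduction modulo $N$; the correct condition is $N\mid k$ and $p\mid k$, i.e.\ $k/N\in q\mathbb{Z}$.
\end{itemize}

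Finally, the case $N<0$ does not follow from the case $N>0$, $N_M<0$ by $\mathcal{S}\mapsto\mathcal{S}^{-1}$. That relabeling turns ``monic top term, all exponents negative'' into ``monic bottom term, all exponents positive'', which is neither case; in $\mathcal{A}((\mathcal{S}))$ the operator is not even monic. The paper treats $N<0$ separately, using the negative powers $L^{-qk}$ expanded in $\mathcal{S}^{-1}$. It proves a separate nonvanishing lemma for the coefficients of $\mathcal{S}^{\pm p}$ in $L^{-qk}$, via the polynomial model $P(z)^{-qk}$ (Lemma 14). That inversion is where $\epsilon$ comes from.
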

Thus arithmetic supports are not merely a property of known examples, they completely characterize finite-band scalar difference Lax operators with independent coefficients by three integers: the degree $N$, the size of the gaps $p$, and the number of dependent variables $m$.  
While Theorem $1$ recovers several classical examples, many parameter values do not appear to have been previously studied. The following table summarizes how known integrable systems fit in our framework.
\\
\\
\\
\begin{tabular}{l|l|l}
     Parameters &  Support & Known hierarchy
     \\
     & &
     \\
     \hline
   & & \\

   $(1,2,1)$  & $\{1,-1\} $ & Volterra 
   \\
   $(1,p>2,1)$  & $\{1,1-p\} $ & Narita-Itoh-Bogoyavlensky 
   \\
   $(1,1,2)$  & $\{1,0,-1\} $ & Toda 
   \\
   $(2,1,3)$  & $\{2,1,0,-1\} $ & Blaszak-Marciniak 
   \\
   $(N,p,m)$,   & $\{N,N-p,...,N-mp\} $ & arithmetic support 
   \\
\end{tabular}
\vspace{5 mm}
\\
The Narita-Itoh-Bogoyavlensky hierarchy was introduced in the works of Narita \cite{N82}, Itoh \cite{I87} and Bogoyavlensky \cite{B88}, while the Blaszak-Marciniak lattice is described in \cite{BM94}.
\\
\indent 
After recalling the calculus of difference operators in Section $2$, we prove Theorem $1$ in Section $3$ when $N>0$ and $N_M<0$. We show that the local Lax symmetries must be a combination of powers of $L$ and use the asymptotic of the support of these powers. Besides recovering the Toda, Volterra, and Narita-Itoh-Bogoyavlensky hierarchies, our classification produces infinitely many new examples. In particular, the support $(1,p>1,2)$ gives rise to a two-field hierarchy whose first flow is 
\[
\begin{cases}
{u}_t=(1-\mathcal{S}^{-p})(v)+u(\mathcal{S}^{p-1}+...+\mathcal{S}-\mathcal{S}^{-1}-...-\mathcal{S}^{1-p})(u),\\[0.4em]
v_t= v(\mathcal{S}^{2p-1}+...+\mathcal{S}^p-1...-\mathcal{S}^{1-p})(u).
\end{cases}
\]
Thus the $(1,p,2)$ hierarchy provides a natural generalization of the Toda lattice ($p=1$) and the Narita-Itoh-Bogoyavlensky hierarchy ($v=0$).
\\
\vspace{1 mm}
\indent In Section $4$, we prove Theorem $1$ when $N<0$. In that case, we reduce the problem to a polynomial model and establish a nonvanishing result for the residue of negative powers of sparse polynomials. This $N<0$ case also produces infinitely many new integrable hierarchies. In particular, we show that the first flow of the hierarchy with support $(-1,1,m)$ coincides with the Belov-Chaltikian equation for $m=2$ and the Beffa-Wang equation for $m>2$. This appears to provide a new scalar difference Lax representation of these hierarchies. To match the notations of \cite{DSKW18}, we need to shift the entries of the coefficients of $L$ 
\[ L=\mathcal{S}^{-1}+v_2 \mathcal{S}^{-2}+\mathcal{S}^{-1}(v_3) \mathcal{S}^{-3}+...+\mathcal{S}^{3-m}(v_{m-1}) \mathcal{S}^{1-m}+\mathcal{S}^{-m-2}(v_{1}) \mathcal{S}^{-m}.\]
The present work raises several natural questions. It would be interesting to investigate whether analogous rigidity phenomena persist in more general classes of Lax operators such as matrix and rational difference operators. Moreover, we intend to understand the Hamiltonian structures associated to these hierarchies as well as their reductions.

\newpage 

 \section{Difference and Pseudodifference Operators} 
 In this section, we recall basic definitions about difference operators on a difference algebra $\mathcal{A}$. We assume that $\mathcal{A}$ is a commutative algebra over $\mathbb{C}$, together with an automorphism $\mathcal{S}$.
 \begin{definition}
     The algebra of difference (resp. pseudodifference) operators on $\mathcal{A}$ is the vector space of Laurent polynomials (resp. series) $\mathcal{A}[\mathcal{S},\mathcal{S}^{-1}]$ (resp. $\mathcal{A}[\mathcal{S}]((\mathcal{S}^{-1}))$). In both algebras, the multiplication is defined by
     \[
        a \mathcal{S}^n b \mathcal{S}^m =a\mathcal{S}^n(b) \mathcal{S}^{n+m},
        \]
        for all $a, b \in \mathcal{A}$ and for all $n,m \in \mathbb{Z}$. A pseudodifference is monic if its leading term is of the form $\mathcal{S}^n$ for $n \in \mathbb{Z}$.
 \end{definition}
 \begin{lemma}
     Every monic difference operator admits a unique inverse in the algebra of pseudodifference operators.
 \end{lemma}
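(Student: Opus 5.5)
Write the monic operator as $L=\mathcal{S}^N(1+X)$ with $X=\sum_{i\ge 1} a_i\mathcal{S}^{-i}$ a finite sum, where $a_i\in\mathcal{A}$. The plan is to factor out the leading shift and invert the remaining part as a geometric series in $\mathcal{S}^{-1}$. Because $\mathcal{S}$ is an automorphism, $\mathcal{S}^N$ is invertible with inverse $\mathcal{S}^{-N}$. It is therefore enough to invert $1+X$ in $\mathcal{A}[\mathcal{S}]((\mathcal{S}^{-1}))$, and then to set
\[
L^{-1}=(1+X)^{-1}\mathcal{S}^{-N}.
\]

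For existence, I take
\[
Y=\sum_{k\ge 0}(-1)^kX^k.
\]
This sum makes sense in $\mathcal{A}[\mathcal{S}]((\mathcal{S}^{-1}))$. Every term of $X^k$ has order at most $-k$. Hence the coefficient of each fixed power $\mathcal{S}^{j}$ receives contributions from only finitely many $k$, namely those with $k\le -j$. Note also that $X^k$ involves shifted coefficients $\mathcal{S}^{n}(a_i)$, but these still lie in $\mathcal{A}$.

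The identity $(1+X)Y=Y(1+X)=1$ is then a telescoping computation, checked coefficient by coefficient. For a fixed power $\mathcal{S}^j$, only finitely many terms contribute, so the identity reduces to a finite telescoping sum and follows.

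Equivalently, and perhaps more cleanly, I can solve $L\cdot B=1$ with $B=\sum_{j\ge 0}b_j\mathcal{S}^{-N-j}$. Comparing coefficients of $\mathcal{S}^{-j}$ gives $b_0=1$ and the recursion
\[
\mathcal{S}^{N}(b_j)+\sum_{i=1}^{j}a_i\,\mathcal{S}^{N-i}(b_{j-i})=0.
\]
Since $\mathcal{S}^N$ is invertible on $\mathcal{A}$, this determines each $b_j$ uniquely from the earlier ones. This produces a unique right inverse. The same triangular argument for $B'L=1$ gives a unique left inverse. In that case the recursion reads $b'_j+\sum_{i} b'_{j-i}\,\mathcal{S}^{\cdots}(a_i)=0$, and it needs no inversion of $\mathcal{S}$ at all.

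For uniqueness, associativity of the pseudodifference algebra gives $B'=B'(LB)=(B'L)B=B$. So the left and right inverses coincide, and any two-sided inverse equals both. Associativity itself is standard: it follows from the multiplication rule together with the fact that $\mathcal{S}$ is an algebra automorphism, so I would state it and not reprove it.

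The only point needing care, which is the mild obstacle here, is well-definedness. One must check that the products and infinite sums in $\mathcal{A}[\mathcal{S}]((\mathcal{S}^{-1}))$ are finite in each degree. This holds because every series involved is bounded above in degree, so each coefficient of a product is a finite sum. The triangular recursion makes this explicit and also settles the question without invoking convergence of the geometric series.
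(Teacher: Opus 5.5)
Your proposal is correct and is essentially the paper's argument. The paper's one-line proof factors out $\mathcal{S}^N$ and inverts $1+X$ by the geometric series in $\mathcal{A}[\mathcal{S}]((\mathcal{S}^{-1}))$; your finiteness-per-degree check, the triangular recursion and the associativity argument for uniqueness supply the details it leaves implicit (one harmless slip: with $L=\mathcal{S}^N(1+X)$, the coefficient in your right-inverse recursion should be $\mathcal{S}^N(a_i)$ rather than $a_i$, which does not affect the triangular structure).
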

 \begin{proof}
     After factoring out the leading term, the geometric series is well-defined in $\mathcal{A}[\mathcal{S}]((\mathcal{S}^{-1}))$. 
 \end{proof}
 \begin{definition}
     The residue $\text{Res } P$ of a pseudodifference operator $P$ is the coefficient of $\mathcal{S}^0=1$.
 \end{definition}
 \begin{lemma}
     For any pair of pseudodifference operators $P,Q$, 
     \[
     \text{Res }[P,Q] \in \text{Im } ( \mathcal{S}-1).
     \]
 \end{lemma}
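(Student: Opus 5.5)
By bilinearity of the commutator and of the residue, it is enough to treat monomials $P=a\mathcal{S}^n$ and $Q=b\mathcal{S}^k$. Since $\mathcal{A}((\mathcal{S}^{-1}))$-type sums are formal, the residue only picks up finitely many contributions. For a general $P=\sum_n a_n\mathcal{S}^n$ and $Q=\sum_k b_k\mathcal{S}^k$, the coefficient of $\mathcal{S}^0$ in $PQ$ only involves pairs with $n+k=0$. Because both series are bounded above in degree, only finitely many such pairs have nonzero coefficients. The reduction to monomials is therefore legitimate.

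For monomials, the multiplication rule gives
\[
PQ=a\,\mathcal{S}^n(b)\,\mathcal{S}^{n+k},\qquad QP=b\,\mathcal{S}^k(a)\,\mathcal{S}^{n+k}.
\]
If $n+k\neq 0$, both residues vanish. If $k=-n$, then
\[
\text{Res }[P,Q]=a\,\mathcal{S}^n(b)-b\,\mathcal{S}^{-n}(a).
\]
Put $f=b\,\mathcal{S}^{-n}(a)$. Because $\mathcal{S}$ is an automorphism of the commutative algebra $\mathcal{A}$, we have $\mathcal{S}^n(f)=\mathcal{S}^n(b)\,a$. Hence
\[
\text{Res }[P,Q]=\mathcal{S}^n(f)-f .
\]

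The remaining step is to observe that $\mathcal{S}^n-1$ factors through $\mathcal{S}-1$. For $n>0$ we have $\mathcal{S}^n-1=(\mathcal{S}-1)(1+\mathcal{S}+\dots+\mathcal{S}^{n-1})$. For $n<0$ we have $\mathcal{S}^n-1=-(\mathcal{S}-1)(\mathcal{S}^{-1}+\dots+\mathcal{S}^{n})$. For $n=0$ the expression is $0$. In every case $\mathcal{S}^n(f)-f\in \text{Im}(\mathcal{S}-1)$. Summing over the finitely many contributing pairs then gives the claim, since $\text{Im}(\mathcal{S}-1)$ is a subspace.

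There is no real obstacle here. The only points requiring care are the finiteness of the residue sum for pseudodifference series, and using the automorphism property to rewrite $a\,\mathcal{S}^n(b)$ as $\mathcal{S}^n\bigl(b\,\mathcal{S}^{-n}(a)\bigr)$.
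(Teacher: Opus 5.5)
Your proof is correct and is essentially the paper's argument: you reduce to monomials $a\mathcal{S}^n$, $b\mathcal{S}^{-n}$, write the residue of the commutator as $(\mathcal{S}^n-1)\bigl(\mathcal{S}^{-n}(a)\,b\bigr)$, and factor $\mathcal{S}^n-1$ through $\mathcal{S}-1$. You are somewhat more careful than the paper, which treats only $n>0$ (the other signs follow by antisymmetry or are trivial) and does not explain why the residue involves only finitely many monomial pairs.
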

 \begin{proof}
    It suffices to check the statement for $P=a \mathcal{S}^n$ and $Q=\mathcal{S}^{-n}$ with $a,b \in \mathcal{A}$ and $n >0$. Indeed we have 
    \[ [a\mathcal{S}^n,b \mathcal{S}^{-n}]=a \mathcal{S}^n(b)-\mathcal{S}^{-n}(a)b=(\mathcal{S}-1)(\mathcal{S}^{n-1}+...+1)(\mathcal{S}^{-n}(a)b).\]
 \end{proof}
\section{Positive-Order Lax Operators}
In this section we consider the commutative algebra $\mathcal{A}$ of difference polynomials freely generated over $\mathbb{C}$ by the coefficients of
        \[ L=\mathcal{S}^{N}+u_1 \mathcal{S}^{N_1}+...+u_M \mathcal{S}^{N_M},\]
        where $N > N_1 > ... > N_M$, $N>0$ and $N_M <0$. We define the integers
        \[ p=gcd(N-N_1,...,N-N_M), \, \, q=p/gcd(p,N), \, \, m=(N-N_M)/p.\]
        We define a grading on the space of difference operators $\mathcal{A}[\mathcal{S}, \mathcal{S}^{-1}]$ by
        \begin{equation}\label{grading}
        d(\mathcal{S})=1, \, d(\mathcal{L})=N .
        \end{equation} 
        A Lax flow is a derivation $\partial$ of $\mathcal{A}$  commuting with the shift $\mathcal{S}$ whose action on the generators $u_i$ is given by 
        \[\partial(L)=[A,L], \]
        where the auxiliary operator $A \in \mathcal{A}[\mathcal{S},\mathcal{S}^{-1}]$ is such that 
        \[ supp \,  [A,L] \subset supp \,  L.\]
        We now prove Theorem $1$ in the case $N>0$.
        \begin{theorem}
            Suppose that $L$ admits infinitely many linearly independent Lax flows. Then the support of $L$ is arithmetic. Moreover, there exists a unique hierarchy of commuting Lax flows on $\mathcal{A}$
            \begin{equation}\label{eqeqeq}
            \partial_k(L)=[(L^{qk})_+,L]\, \, k >0.
            \end{equation} 
            This hierarchy possesses infinitely many conserved densities given by 
            \[ h_k= \text{ Res } L^{qk}, \, \, k >0.\]
        \end{theorem}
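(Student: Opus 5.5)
The plan is to first show that any auxiliary operator $A$ producing a Lax flow is, up to terms that do not change the flow, a linear combination of truncated powers $(L^{k/N})_+$, and then to show that only those with $k$ a multiple of $Nq/\ldots$ (equivalently, integer powers $L^{qk}$) preserve the support.

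\textbf{Step 1: reduction to fractional powers.} Since $L$ is monic of order $N>0$, it has a unique monic $N$-th root $L^{1/N}=\mathcal{S}+\sum_{j\ge 0} a_j\mathcal{S}^{-j}$ in $\mathcal{A}[\mathcal{S}]((\mathcal{S}^{-1}))$; its existence and uniqueness follow as in the inversion lemma, by solving recursively for the coefficients. I would show that the centralizer of $L$ in pseudodifference operators is spanned by the series $\sum c_k L^{k/N}$ with constant coefficients $c_k$. The argument is to compare leading terms and use the freeness of the $u_i$: a coefficient $a$ with $a\mathcal{S}^n(\cdot)$ commuting with $\mathcal{S}^N$ at top order forces $\mathcal{S}^N(a)=a$, hence $a$ is constant. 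Given $A$ with $\mathrm{supp}[A,L]\subset \mathrm{supp}\,L\subset(-\infty,N-1]$, I would write $A=\sum_k c_k L^{k/N}+R$ order by order from the top, using the grading $d(\mathcal{S})=1$, $d(u_i)=N-N_i$. In this grading $L$ is homogeneous, so we may take $A$ homogeneous of degree $n$. At each order the condition $[A,L]$ having no terms of order $>N_1$ forces the top part of $A$ to agree with $c\,(L^{n/N})_+$. Subtracting this, the remainder $A-c(L^{n/N})_+$ has small order, and the same argument applied from the bottom, where $N_M<0$ plays the role of $N$ via the expansion in $\mathcal{S}$, shows that it is a constant, which acts trivially. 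This two-sided control is exactly where the hypothesis $N_M<0$ enters.

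\textbf{Step 2: the support condition forces arithmeticity.} Now $\partial L=[(L^{n/N})_+,L]=-[(L^{n/N})_-,L]$. Its support lies between $N_M-1-\dots$ and $N-1$, and I would compute its extreme terms explicitly. The top term is $[\,(\text{coefficient of }\mathcal{S}^0\text{ in }L^{n/N})\,,\,\mathcal{S}^N]$ plus lower terms, and by freeness one reads off which exponents $N-j$ occur. Write $L=\mathcal{S}^N(1+\sum u_i'\mathcal{S}^{N_i-N})$. The powers $L^{n/N}$ contain exactly the exponents $n-\sum(\text{nonnegative combinations of } N-N_i)$, and because the generators are free there are no accidental cancellations. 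Hence $[ (L^{n/N})_+,L]$ has nonzero coefficients on exponents of the form $N-(\text{element of the semigroup generated by } N-N_i)$ lying in $[N_M,N]$. For this set to lie in $\mathrm{supp}\,L$ for infinitely many $n$, the semigroup generated by the gaps, intersected with $[0,N-N_M]$, must equal the gaps themselves. That forces the gaps to be $p,2p,\dots,mp$ with $p=\gcd$. In the other direction, if $n/N\notin q\mathbb{Z}$, I expect that the fractional root produces exponents not congruent to $N$ mod $p$. Concretely, $L^{n/N}$ has exponents $\equiv n \pmod p$, and the commutator has exponents $\equiv n+N$, which must be $\equiv N$. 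So $p\mid n$, and combined with integrality this gives $n\in Nq\mathbb{Z}$, i.e. $A=(L^{qk})_+$.

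\textbf{Step 3: hierarchy properties.} Commutativity and the conservation laws are standard. We have $\partial_k L^{ql}=[(L^{qk})_+,L^{ql}]$, and the usual Zakharov–Shabat argument with $\pm$ splittings gives $[\partial_k,\partial_l]=0$. For the densities, $\partial_k \mathrm{Res}\,L^{ql}=\mathrm{Res}[(L^{qk})_+,L^{ql}]\in\mathrm{Im}(\mathcal{S}-1)$ by the earlier residue lemma. Uniqueness follows from Step 1.

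\textbf{Main obstacle.} I expect Step 2 to be the hardest part: proving that the extreme or otherwise forbidden exponents genuinely appear with nonzero coefficient, so that no cancellation occurs. I would handle it by specializing variables, setting all but one or two $u_i$ to zero and computing the resulting binomial-type coefficients explicitly, which suffices to show nonvanishing by freeness.
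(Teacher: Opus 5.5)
\textbf{Step~1 fails: fractional powers do not exist here.} For $N\ge 2$ the monic root $L^{1/N}$ does not exist in $\mathcal{A}[\mathcal{S}]((\mathcal{S}^{-1}))$ when $\mathcal{A}$ is the free difference polynomial algebra. The first coefficient $a$ of $L^{1/N}$ below $\mathcal{S}$ that can be nonzero must satisfy $(1+\mathcal{S}+\cdots+\mathcal{S}^{N-1})(a)=u_1$. But $u_1$ is not in that image: on shifts of $u_1$ the operator acts as multiplication by $1+z+\cdots+z^{N-1}$, which is not a unit in $\mathbb{C}[z,z^{-1}]$. More generally, an operator $\mathcal{S}^n+\cdots$ commuting with $L$ requires $(\mathcal{S}^n-1)(u_1)\in\mathrm{Im}(\mathcal{S}^N-1)$, i.e.\ $N\mid n$. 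So the centralizer is not what you describe, and ``the top part of $A$ agrees with $c(L^{n/N})_+$'' is meaningless for $N\nmid n$.

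This is not cosmetic. If your roots existed, then for $p\mid n$ and $N\nmid n$ (e.g.\ $p=1$, $N=2$, $n=1$), $[(L^{n/N})_+,L]=-[(L^{n/N})_-,L]$ would be extra support-preserving flows, contradicting the uniqueness you want to prove. Your appeal to ``integrality'' in Step~2 therefore has nothing behind it. The missing ingredient is exactly this obstruction, used as in the paper's Lemma~6. Expand $A=\sum_{i\le I}B_iL^i$ with each $B_i$ supported in $[0,N-1]$; the leading term must be $b\,\mathcal{S}^nL^I$ with $b$ constant. If $0<n<N$, the commutator contains $b(\mathcal{S}^n-1)(u_1)\mathcal{S}^{n+N_1+NI}$ above order $N_1$. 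This cannot be cancelled, since $z^N-1\nmid z^n-1$. The result is $A_+=b(L^I)_+$ with $I\in\mathbb{Z}$, after Lemma~5 has removed $A_-$ using $N_M<0$.

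\textbf{Step~2 misidentifies which exponents occur.} Take an integer power $L^n$. The coefficient of $\mathcal{S}^{N-jp}$ in $-[(L^n)_-,L]$ is built from the coefficients $F_1,\dots,F_j$ of $\mathcal{S}^{-p},\dots,\mathcal{S}^{-jp}$ in $L^n$. These correspond to gap-semigroup elements $nN+j'p$ of size comparable to $n$, not to small semigroup elements. Your criterion is in fact satisfied by non-arithmetic supports. For $L=\mathcal{S}+u\mathcal{S}^{-1}+v\mathcal{S}^{-2}$ the gaps are $2,3$ and $\langle 2,3\rangle\cap[0,3]=\{0,2,3\}$. Yet $[L_+,L]=[\mathcal{S},L]$ has coefficient $(\mathcal{S}-1)(u)$ at $\mathcal{S}^{0}=\mathcal{S}^{N-1}$, although $1\notin\langle 2,3\rangle$.

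What is needed is the paper's Lemma~8: for large $k$, every coefficient of $\mathcal{S}^{ip}$ in $L^{qk}$ is nonzero. Its proof uses Bezout for the coprime normalized gaps, and $N_M<0$ to keep the number of non-leading factors below $qk$. Then the $\mathcal{S}^{N-p}$ coefficient $(1-\mathcal{S}^N)(F_1(k))$ of $[(L^{qk})_-,L]$ is nonzero, which forces $N_1=N-p$.

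For the intermediate exponents, freeness alone does not exclude cancellation. The relevant coefficient $(1-\mathcal{S}^N)(F_j)+\sum_{d_i<j}(\mathcal{S}^{(d_i-j)p}(u_i)-u_i\mathcal{S}^{N-d_ip})(F_{j-d_i})$ has mixed signs. The paper handles this in two cases. When $N\ge p$, it specializes to $\mathcal{S}^N+u_1\mathcal{S}^{N-p}+u_M\mathcal{S}^{N-mp}$ and reads off the top component for the shift filtration. When $N<p$, it argues via the first gap exceeding $p$. Your ``main obstacle'' remark points in this direction but gives no argument.

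Step~3 is fine and matches the paper.
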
      
Before proving this Theorem, we state and prove a few Lemmas.
\begin{lemma}
   Suppose that $A \in \mathcal{A}[\mathcal{S}, \mathcal{S}^{-1}]$ is such that $supp([A,L]) \subset supp(L)$. Then there exists $B \in \mathcal{A}[\mathcal{S}]$ such that $[B,L]=[A,L]$.
\end{lemma}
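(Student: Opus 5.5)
The plan is to remove the negative part of $A$ step by step, each time subtracting a constant multiple of a power of $L$. Since $[L^j,L]=0$, such subtractions do not change $[A,L]$. Write $A=\sum_{k\ge k_0} a_k\mathcal{S}^k$ with $a_{k_0}\neq 0$. If $k_0\ge 0$ we take $B=A$ and are done. Otherwise we will show that $k_0=jN_M$ for some $j>0$ and $a_{k_0}=c\,\ell_j$, where $c\in\mathbb{C}$ and
\[
\ell_j=u_M\,\mathcal{S}^{N_M}(u_M)\cdots\mathcal{S}^{(j-1)N_M}(u_M).
\]
This $\ell_j$ is exactly the coefficient of the lowest term $\mathcal{S}^{jN_M}$ of $L^j$. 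Replacing $A$ by $A-cL^j$ leaves $[A,L]$ unchanged and strictly raises the lowest order. The positive terms added by $L^j$ do no harm, so after finitely many steps the lowest order is $\ge 0$, and the resulting $B\in\mathcal{A}[\mathcal{S}]$ satisfies $[B,L]=[A,L]$.

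The constraint on the lowest coefficient comes from the lowest-order term of $[A,L]$. With $a=a_{k_0}$ and $k=k_0<0$, the only contributions to order $k+N_M$ are
\[
a\mathcal{S}^k u_M\mathcal{S}^{N_M}-u_M\mathcal{S}^{N_M}a\mathcal{S}^k=\bigl(a\,\mathcal{S}^k(u_M)-u_M\,\mathcal{S}^{N_M}(a)\bigr)\mathcal{S}^{k+N_M}.
\]
Since $k+N_M<N_M=\min supp(L)$, the hypothesis $supp([A,L])\subset supp(L)$ forces
\[
a\,\mathcal{S}^k(u_M)=u_M\,\mathcal{S}^{N_M}(a).
\]

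Solving this functional equation in the free difference polynomial algebra is the key step, and the only place where freeness of the $u_i$ is used. Because $k\neq 0$, the generator $\mathcal{S}^k(u_M)$ is distinct from $u_M$, so unique factorization gives $u_M\mid a$. Writing $a=u_M a_1$ turns the equation into
\[
a_1\,\mathcal{S}^{k}(u_M)=\mathcal{S}^{N_M}(u_M)\,\mathcal{S}^{N_M}(a_1).
\]
More generally, after $i$ steps we have $a=\ell_i a_i$ and
\[
a_i\,\mathcal{S}^k(u_M)=\mathcal{S}^{iN_M}(u_M)\,\mathcal{S}^{N_M}(a_i).
\]
If $k=iN_M$, the two $u_M$-factors cancel and we get $a_i=\mathcal{S}^{N_M}(a_i)$. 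Since $N_M\neq 0$, a nonzero shift-invariant difference polynomial is constant, so $a_i=c$. If $k\neq iN_M$, then $\mathcal{S}^{iN_M}(u_M)$ divides $a_i$, and the iteration continues.

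Each step lowers the total degree of $a_i$ by one, so the process must stop. It can only stop in the case $k=jN_M$ with $a_j=c$, which yields $a=c\,\ell_j$. If instead the degree reached zero with $k\neq jN_M$, we would get $a=0$, a contradiction. I expect the main care to be needed in this divisibility bookkeeping: keeping track of the distinct shifted generators, and checking that the lowest term of $L^j$ is indeed $\ell_j\mathcal{S}^{jN_M}$. The latter follows by induction on $j$ from $L^{j}=L^{j-1}L$.
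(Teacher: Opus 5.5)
Your proof is correct and follows the paper's argument essentially step for step. You isolate the lowest term $a\mathcal{S}^k$ of $A$ and derive $a\,\mathcal{S}^k(u_M)=u_M\,\mathcal{S}^{N_M}(a)$. You then peel off the factors $\mathcal{S}^{iN_M}(u_M)$ by unique factorization to get $k=jN_M$ and $a=c\,\ell_j$, subtract $cL^j$, and induct on the lowest exponent. The only differences are bookkeeping: the paper takes $n$ maximal with $\ell_{n+1}\mid a$, where you iterate and bound the number of steps by the degree of $a$, and you also check that the lowest term of $L^j$ is $\ell_j\mathcal{S}^{jN_M}$, which the paper leaves implicit.
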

\begin{proof}
By induction on the lowest exponent of $A$. Let us denote it by $a \mathcal{S}^k$. If $k$ is nonnegative, there is nothing to prove. If it is negative, since the support of $[A,L]$ is contained in the support of $L$, we must have $[a \mathcal{S}^k,u_M \mathcal{S}^{N_M}]=0$. Indeed, this term is the unique contribution to the lowest exponent of $[A,L]$ which lies strictly below $N_M$. In other words, we must have
\[ 
a \mathcal{S}^k(u_M)=u_M \mathcal{S}^{N_M}(a).
\]
Let $n$ be maximal such that $u_M \mathcal{S}^{N_M}(u_M)... \, \mathcal{S}^{nN_M}(u_M)$ divides $a$ and let $\Tilde{a}$ be the quotient. We have 
\[ 
\Tilde{a} \mathcal{S}^k(u_M)= \mathcal{S}^{(n+1)N_M}(u_M)\mathcal{S}^{N_M}(\Tilde{a}).
\]
If $k \neq (n+1)N_M$, then $\mathcal{S}^{(n+1)N_M}(u_M)$ must divide $\Tilde{a}$ so $n$ is not maximal. Hence $k=(n+1)N_M$ and $\Tilde{a}=\mathcal{S}^{N_M}(\Tilde{a})$. Since 
\[ 
\mathcal{A}=\mathbb{C}[\mathcal{S}^j(u_i) | 1 \leq i \leq M , j \in \mathbb{Z}], 
\]
 only constants can be invariant for $\mathcal{S}^{N_M}$ which implies that $\Tilde{a} \in \mathbb{C}$. We showed that the lowest order term of $A$ must be the same as the lowest order term of $\Tilde{a}L^{n+1}$.
Since $[A,L]=[A-\Tilde{a}L^{n+1},L]$, we complete the proof by induction on the lowest exponent of $A$.
\end{proof}

\newpage

\begin{lemma}
    Let $A \in \mathcal{A}[\mathcal{S}]$ be such that $supp([A,L]) \subset supp(L)$. Then there exists $B \in \mathbb{C}[L]$ such that $A=B_+$.
\end{lemma}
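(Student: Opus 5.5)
Write $A=\sum_{k=0}^{K} a_k\mathcal{S}^k$ with $a_K\neq 0$, and argue by induction on $K$ that $A$ agrees with the nonnegative part of an element of $\mathbb{C}[L]$. The idea mirrors the proof of the previous Lemma, but works from the top of $A$ instead of the bottom. The leading contributions to $[A,L]$ in degree $K+N_1$ come from $[a_K\mathcal{S}^K,u_1\mathcal{S}^{N_1}]$ and $[a_{K-(N-N_1)}\mathcal{S}^{K-(N-N_1)},\mathcal{S}^N]$-type terms. It is cleaner to first look at degree $K+N$: there the only term is $a_K-\mathcal{S}^N(a_K)$. Since $K+N>N$ whenever $K>0$, this degree lies outside $supp(L)$, so $\mathcal{S}^N(a_K)=a_K$. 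As in the previous Lemma, freeness of $\mathcal{A}=\mathbb{C}[\mathcal{S}^j(u_i)]$ forces $a_K\in\mathbb{C}$. Inductively, the same argument applies to every coefficient of $A$ in degrees $k$ with $k+N>N$ not hit by lower-order corrections.

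To make this rigorous, I will use the grading (\ref{grading}) together with the pseudodifference algebra. For a monic $L$ of positive order $N$, every pseudodifference operator commuting with $L$ lies in $\mathbb{C}((L^{1/N}))$ in the usual sense, so the natural strategy is to compare $A$ with the nonnegative parts of powers $L^{j/N}$. The step is as follows. Subtract from $A$ the operator $c(L^{K/N})_+$, where $c=a_K$, using the fractional power $L^{1/N}$, which exists as a monic pseudodifference operator. The commutator $[(L^{K/N})_+,L]=-[(L^{K/N})_-,L]$ has support in degrees $<N$, so subtracting it preserves the hypothesis only below $N$. I then expect to show that, when $K/N\notin\mathbb{Z}$, the low-degree part $[(L^{K/N})_-,L]$ has support outside $supp(L)$, or at least that the combined constraints force $c=0$. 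This would leave only integer powers $K=jN$, with coefficient $(L^{j})_+$. After subtracting $c(L^j)_+$, the induction on $K$ concludes, and the base case $K=0$ gives $A=a_0$, which is constant by the degree-$N$ argument.

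The main obstacle is showing that the non-integer fractional powers are excluded, i.e.\ that $[(L^{K/N})_-,L]$ really does exit $supp(L)$ when $N\nmid K$. I would try to locate a degree just below $N_M$: the lowest term of $[(L^{K/N})_-,L]$ is $[\,\cdot\,\mathcal{S}^{K-N},u_M\mathcal{S}^{N_M}]$-type, which sits at $K-N+N_M<N_M$. I would then check that its coefficient is nonzero by tracking the monomial $u_M$ linearly, which is possible because the $u_i$ are free. If this direct computation becomes messy, the fallback is the argument of the previous Lemma: there, invariance of coefficients under $\mathcal{S}^{N_M}$ combined with freeness pinned down exactly when the lowest term is allowed.
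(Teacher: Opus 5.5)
\textbf{Verdict: there is a genuine gap.} Your central step uses an object that does not exist in this setting, and the base case fails under the hypothesis as literally stated.

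\textbf{The fractional power $L^{1/N}$ does not exist.} For difference operators with $N\ge 2$ and free coefficients, there is no $N$-th root $L^{1/N}$ in $\mathcal{A}[\mathcal{S}]((\mathcal{S}^{-1}))$. Write $X=\mathcal{S}+x_1+x_2\mathcal{S}^{-1}+\cdots$. The coefficient of $\mathcal{S}^{N-j}$ in $X^N$ is $(1+\mathcal{S}+\cdots+\mathcal{S}^{N-1})(x_j)$ plus terms in $x_1,\ldots,x_{j-1}$. So $X^N=L$ forces $x_j=0$ for $j<N-N_1$, and then $(1+\mathcal{S}+\cdots+\mathcal{S}^{N-1})(x_{N-N_1})=u_1$. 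This is impossible: on the part linear in $u_1$ it requires $1+t+\cdots+t^{N-1}$ to be a unit of $\mathbb{C}[t,t^{-1}]$. The description of the centralizer as $\mathbb{C}((L^{1/N}))$ is imported from differential operators, where the same recursion only requires dividing by $N$.

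Consequently, subtracting $c(L^{K/N})_+$ is meaningless when $N\nmid K$. The step you yourself call the main obstacle, ruling out $N\nmid K$, is not done. Your suggested attack on it cannot work either: for any $A'\in\mathcal{A}[\mathcal{S}]$, the commutator $[A',L]$ has no terms below degree $N_M$, and $(L^{K/N})_-$ is an infinite series with no lowest term to track.

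\textbf{Where the obstruction actually lies.} It sits just above $N_1$, at the degree $K+N_1$ that you wrote down and then dropped. Once $a_K=c$ is constant, the coefficient of $\mathcal{S}^{K+N_1}$ in $[A,L]$ is $c(\mathcal{S}^K-1)(u_1)+(1-\mathcal{S}^N)(a_{K-N+N_1})$, with $a_k=0$ for $k<0$. Its part linear in $u_1$ can vanish with $c\neq 0$ only if $t^N-1$ divides $t^K-1$, i.e.\ $N\mid K$. This is the content of the paper's key step. The paper phrases it through the expansion $A=\sum_{i\le I}B_iL^i$, with the exponents of each $B_i$ in $[0,N-1]$. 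Homogeneity isolates the single constant term $b\mathcal{S}^nL^I$, and then one uses that $(\mathcal{S}^n-1)(u_1)\notin\mathrm{Im}(\mathcal{S}^N-1)$ for $0<n<N$.

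\textbf{The base case and the reading of the hypothesis.} Your base case is wrong for the hypothesis as literally stated. At $K=0$ the degree $K+N=N$ lies in $\mathrm{supp}(L)$, so nothing forces $a_0$ to be constant. For $L=\mathcal{S}+u\mathcal{S}^{-1}$ and $A=u+\mathcal{S}(u)$ one finds $[A,L]=(u-\mathcal{S}^2(u))\mathcal{S}+u(\mathcal{S}(u)-\mathcal{S}^{-1}(u))\mathcal{S}^{-1}$. This is supported in $\mathrm{supp}(L)$, yet $A$ is not of the form $B_+$ with $B\in\mathbb{C}[L]$. The paper's proof tacitly uses the stronger condition that $[A,L]$ has order at most $N_1$, i.e.\ no $\mathcal{S}^N$ component. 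This is automatic for a Lax flow, because the leading coefficient of $L$ is $1$. You need that reading too. Under it, the top-coefficient argument works for every $K\ge 0$, and the degree-$(K+N_1)$ argument works for every $K>0$.

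\textbf{The induction does not close as written.} The operator $A-c(L^j)_+$ need not satisfy the support hypothesis, since $[(L^j)_+,L]$ is only known to have order $\le N-1$. Instead, reduce as the paper does to $A$ homogeneous of weight $D$ for the grading $d(\mathcal{S})=1$, $d(L)=N$. A constant top coefficient then forces $K=D$. Set $A'=A-c(L^{D/N})_+$. All coefficients of $A'$ have positive weight, while $[A',L]$ still has order $\le N-1$. Hence the top coefficient of $A'$ is $\mathcal{S}^N$-invariant, hence constant, hence zero. So $A'=0$, and no induction on $K$ is needed.
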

\begin{proof}
    Since $L$ is homogeneous for the grading \eqref{grading}, every homogeneous component of $A$ must satisfy the Lemma hypothesis. Therefore we may assume that $A$ itself is homogeneous. Since $L$ is monic, there exists a unique family $B_i \in \mathcal{A}[\mathcal{S}]$ such that 
    \[ 
    A= \sum_{i \leq I} B_i L^i, 
    \]
    and the exponents of $B_i$ are bounded by $N-1$ and $0$.
    We have 
     \[ 
     [A,L]= \sum_{i \leq I} [B_i,L] L^i. 
     \]
     For any $i$, the order of $[B_i,L]$ is at most $2N-1$. Hence, if the leading coefficient of $B_I$ is not a constant, the leading coefficient of $[A,L]$ will have strictly greater order than $N_1$ which leaves the support of $L$ and is a contradiction. Using the homogeneity of $A$, we may therefore assume that 
     \[ 
     A= b \mathcal{S}^n L^I+\sum_{i \leq I} C_i L^i ,
     \]
     where $b \in \mathbb{C}$, $n<N$ and no coefficient of any $C_i$ is a constant. We have 
     \[ [b \mathcal{S}^n L^I, L]=b(\mathcal{S}^n-1)(u_1)S^{n+N_1+NI}+ o(n+N_1+NI).
     \]
     Moreover, if $J$ is the largest index such that $C_J \neq 0$ and $X \mathcal{S}^k$ is the leading term of $C_J$, then 
     \[ [\sum_{i \leq I} C_i L^i,L]=(1-\mathcal{S}^N)(X)S^{k+(J+1)N}+ o(p+(J+1)N).
     \]
      We know by assumption that the sum of these two commutators should have order at most $N_1$. If $n>0$, $n+N_1+NI>N_1$ no matter $I \geq 0$ so the leading term of both commutators must cancel. However, this is impossible since $(\mathcal{S}^n-1)(u_1)$ cannot be in the image of $\mathcal{S}^N-1$. Hence $n=0$ and the leading term of $[A,L]$ is given uniquely by the second term. We need $k+(J+1)N \leq N_1 <N$, hence $J$ is negative. We have therefore proved that 
      \[
      A_+=b L^I_+.
      \]
\end{proof}
\begin{lemma}
    Suppose that there exists a sequence $(A_k)_{k \in \mathbb{Z}_{+}} \in \mathcal{A}[\mathcal{S}, \mathcal{S}^{-1}]$ such that 
    \begin{enumerate}
        \item $supp([A_k,L]) \subset supp(L)$,
        \item $[A_k,L]$ spans an infinite-dimensional subspace of $\mathcal{A}[\mathcal{S}, \mathcal{S}^{-1}]$.
    \end{enumerate}
    Then the set $ X= \{ n \in \mathbb{N} | supp([L^n_+,L]) \subset supp(L) \} $ is infinite and $[L^n_+,L]$ with $n \in X$ span an infinite-dimensional subspace of  $\mathcal{A}[\mathcal{S}, \mathcal{S}^{-1}]$.
\end{lemma}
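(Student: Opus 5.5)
We combine the two preceding lemmas and then use a finite-dimensionality argument on the graded pieces. Fix $k$. By the first of the two preceding lemmas, there is $B_k \in \mathcal{A}[\mathcal{S}]$ with $[B_k,L]=[A_k,L]$, and $B_k$ still satisfies the support condition. By the second lemma, $B_k=(P_k(L))_+$ for some polynomial $P_k\in\mathbb{C}[L]$. Hence
\[
[A_k,L]=[(P_k(L))_+,L]=\sum_{n} c_{k,n}[L^n_+,L],
\]
where $P_k=\sum_n c_{k,n}L^n$ and $c_{k,n}\in\mathbb{C}$.

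The key point is that the support condition can be imposed degree by degree, so it descends from $P_k$ to the individual powers $L^n$. Since $L$ is homogeneous of degree $N$ for the grading \eqref{grading} and $d(\mathcal{S})=1$, the operator $[L^n_+,L]$ is homogeneous of degree $(n+1)N$. The sum $\sum_n c_{k,n}[L^n_+,L]$ satisfies $supp\subset supp(L)$, and this condition is compatible with the homogeneous decomposition: the coefficient of $\mathcal{S}^j$ in a sum of homogeneous pieces of distinct degrees splits into its homogeneous components in $\mathcal{A}$. So each term with $c_{k,n}\neq 0$ satisfies $supp([L^n_+,L])\subset supp(L)$. The proof of the second lemma already performs exactly this reduction to homogeneous $A$, so one could also argue there directly that each homogeneous component of $A$ is a multiple of some $L^I_+$. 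In either form, $[A_k,L]$ lies in the span of $\{[L^n_+,L] : n\in X\}$.

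It remains to conclude. By hypothesis, the $[A_k,L]$ span an infinite-dimensional space. That space is contained in the span of the $[L^n_+,L]$ with $n\in X$, so this span is infinite-dimensional too. In particular $X$ must be infinite, since a finite set would span a finite-dimensional space.

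The main obstacle is the homogeneity step: we must check that the grading on $\mathcal{A}$ is well defined (the generators $u_i$ get degree $N-N_i$, and $\mathcal{S}$ acts as an automorphism preserving degree) and that different $n$ really land in different degrees. The latter holds because $N>0$, so $(n+1)N$ is injective in $n$. Once this is in place, the rest is a direct assembly of the two previous lemmas.
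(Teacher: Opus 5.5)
Your proposal is correct and follows the paper's proof: Lemma 5 to pass to $B_k\in\mathcal{A}[\mathcal{S}]$, Lemma 6 to write $B_k$ as a combination of the $L^n_+$, then the spanning argument. Your grading argument makes explicit a step the paper only asserts, namely that every power occurring actually lies in $X$; it works because $[L^n_+,L]$ is homogeneous of degree $(n+1)N$, and these degrees are distinct for distinct $n$.
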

\begin{proof}
    By Lemma $5$, we can replace the sequence $A_k \in \mathcal{A}[\mathcal{S}, \mathcal{S}^{-1}]$ by a sequence $B_k \in \mathcal{A}[\mathcal{S}]$ without changing both conditions of the Lemma. By Lemma $6$, every $B_k$ must be a linear combination of $L^n_+$ for $n \in X$. Therefore $X$ must be infinite and the set of $[L^n_+,L]$ with $n \in X$ must span an infinite-dimensional subspace of  $\mathcal{A}[\mathcal{S}, \mathcal{S}^{-1}]$ in order to meet condition $2$.
\end{proof}

\begin{lemma}
    Let $i \in \mathbb{Z}$. Then, for large enough $k >0$, the coefficient of $\mathcal{S}^{ip}$ in $L^{qk}$ is nonzero.
\end{lemma}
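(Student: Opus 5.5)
The plan is to reduce the statement to a counting problem about exponents. I would specialize the free generators to positive constants, so that no cancellation can occur, and then use a Frobenius-type argument on the support.

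\emph{Specialization.} The coefficient of $\mathcal{S}^{ip}$ in $L^{qk}$ is a difference polynomial in the generators $\mathcal{S}^j(u_l)$. To show it is nonzero, it suffices to exhibit one ring homomorphism $\mathcal{A}\to\mathbb{C}$, compatible with the shift, under which its image is nonzero. Since $\mathcal{A}$ is freely generated by the $\mathcal{S}^j(u_l)$, we may send every $\mathcal{S}^j(u_l)$ to the same positive real number $c_l>0$. This specialization commutes with $\mathcal{S}$, and the operator product becomes ordinary multiplication of Laurent polynomials in a variable $z$. The coefficient in question then becomes the coefficient of $z^{ip}$ in $f(z)^{qk}$, where
\[
f(z)=z^N+\sum_{l=1}^M c_l z^{N_l}.
\]
All coefficients of $f$ are positive, so every coefficient of $f^{qk}$ is a sum of positive terms. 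Hence the coefficient of $z^{ip}$ is nonzero exactly when $ip$ is a sum of $qk$ exponents taken, with repetition, from $\{N,N_1,\dots,N_M\}$.

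\emph{Arithmetic reformulation.} Write $N_l=N-a_lp$ with positive integers $a_l$. By definition of $p$, we have $\gcd(a_1,\dots,a_M)=1$, and $a_M=m$. Also $qN=pN/\gcd(p,N)$ is a multiple of $p$. A sum of $qk$ exponents has the form $qkN-p\,s$, where $s$ is a sum of at most $qk$ of the $a_l$ (the terms equal to $N$ contribute $0$). So I need to show that
\[
s_k=\frac{kN}{\gcd(p,N)}-i
\]
is a sum of at most $qk=\frac{kp}{\gcd(p,N)}$ elements of $\{a_1,\dots,a_M\}$, for all large $k$.

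\emph{Counting argument.} Because the $a_l$ have gcd $1$, the Frobenius theorem gives a bound $F$ such that every integer $\ge F$ is a nonnegative combination of the $a_l$. I will use a controlled representation:
\[
s_k = r + t\,m,\qquad 0\le r-F< m,
\]
and represent $r$ using a bounded number $C$ of summands; $C$ depends only on $L$, not on $k$. The total number of summands is then at most $C+s_k/m$, which grows like $\frac{kN}{m\gcd(p,N)}$. This is strictly less than $qk=\frac{kp}{\gcd(p,N)}$ for large $k$ precisely because $N<mp=N-N_M$, that is, because $N_M<0$. Concretely,
\[
qk-\frac{s_k}{m}=\frac{k(mp-N)}{m\gcd(p,N)}+\frac{i}{m}\longrightarrow\infty ,
\]
so it eventually exceeds $C$. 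Also $s_k\ge F+m$ for large $k$ since $N>0$. The remaining summands are filled with the exponent $N$.

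\emph{Main obstacle.} The main point requiring care is this last inequality: one must check that the linear growth rate $N/m$ of the number of summands needed is strictly below the available $p$. This is where the hypotheses $N>0$ and $N_M<0$ enter. The other subtle point is justifying that the shift-compatible specialization to constants is legitimate; this follows from the freeness of $\mathcal{A}$, as in the proof of Lemma 5.
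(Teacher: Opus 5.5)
Your proof is correct and follows essentially the same route as the paper. Positivity reduces nonvanishing to a semigroup problem. You then represent each residue class modulo $m$ with boundedly many gaps (Frobenius in your version, Bezout in the paper's), fill with copies of $m=d_M$, use $N-mp=N_M<0$ to see that fewer than $qk$ summands are needed, and pad with the exponent $N$.

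The only difference is cosmetic. You specialize every $\mathcal{S}^j(u_l)$ to a positive constant and work with $f(z)^{qk}$, whereas the paper argues directly that monomials in the free algebra cannot cancel. Your specialization is legitimate here because positive powers of $L$ introduce no signs.
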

\begin{proof}
    First, let us check that $L^{qk}$ is a series in $\mathcal{S}^p$. This is because one can write $L=\mathcal{S}^{N}M(\mathcal{S}^p)$ and $\mathcal{S}^{Nqk}=(\mathcal{S}^p)^{Nkq/p}$. Second, note that every nonzero contribution of $L^{qk}$ comes with a positive coefficient, so proving that $F(k) \neq 0$ for large $k$ amounts to saying that the diophantine equations
    \begin{equation} \label{dio}
        \begin{split}
            -ip &= a_0^k N+a_1^k N_1+...+a_M^k N_M,  \\
     qk &= a_0^k+ ...+ a_M^k,
    \end{split}
    \end{equation}
    have a solution for large $k$ in $\mathbb{Z}_+^M$. Using the variables $d_j=(N-N_j)/p$, we can rewrite the first equation as
\[a_1^kd_1+...+a_M^kd_M=qkN/p+i. \]
For each integer $s=0,...,m-1$, since 
\[ gcd(d_1,...,d_M)=1, \]
and by Bezout, there exists 
\[ B_s=b_1^sd_1+...+b_M^sd_M=s+l_sm,\]
where $l_s$ and all $b_i^s$ are nonnegative integers. Bezout gives relative integers but after adding $l_s m$ for large enough $l_s$ we can always obtain nonnegative coefficients.  From now on, let us pick integer $k$ such that 
\[ k > \frac{p}{qN}  \, max \,  \{ s+ l_sm \, | s=0,...,m-1 \}. \]
There exists a unique $s$ and $l >0$
such that 
\[ \frac{qkN}{p}+i = B_s+lm.\]
Recall that $m=d_M$. Therefore $(b_1^s,...,b_M^s+l)$ satisfies the first equation in \eqref{dio}. As for the second equation, we have 

 \[ l= \frac{Nk}{rm}+ O(1). \]
 Therefore we have 
 \[ b_1^s+...+b_M^s+ l=\frac{qkN}{mp}+ O(1),\]
 which ensures that our tuple also satisfies the second equation in \eqref{dio} for all large $k$, since 
 \[ N_M=N-mp<0.\]
 Indeed, $ b_1^s+...+b_M^s+ l < qk$ for large $k$ so one can pick a nonnegative integer $a_0^k$ such that $(a_0^k,b_1^s,...,b_M^s+l)$ satisfies both equations in \eqref{dio}.
\end{proof}
\begin{lemma}
   Suppose that $ X= \{ n \in \mathbb{N} | supp([L^n_+,L]) \subset supp(L) \} $ is infinite and that $[L^n_+,L]$ with $n \in X$ spans an infinite-dimensional subspace of  $\mathcal{A}[\mathcal{S}, \mathcal{S}^{-1}]$. Then $supp(L)$ is arithmetic.
\end{lemma}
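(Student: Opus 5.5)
The plan is to exploit two facts: the exponents appearing in $[L^n_+,L]$ are confined to a single residue class modulo $p$ and to the window $[N_M,N-1]$, and for suitable $n$ every admissible exponent in that window actually occurs. Since $supp([L^n_+,L])\subset supp(L)$ for $n\in X$, this will force every $N-jp$ with $1\le j\le m$ to lie in $supp(L)$.

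\textbf{Step 1 (congruence).} We have $L=\mathcal{S}^N M(\mathcal{S}^p)$, so all exponents of $L^n$, and hence of $L^n_\pm$, are congruent to $nN$ modulo $p$. Consequently all exponents of $[L^n_+,L]$ are congruent to $(n+1)N$ modulo $p$. The exponents of $L$ are congruent to $N$. So if $[L^n_+,L]\neq 0$ and $n\in X$, then $p\mid nN$, that is, $q\mid n$. Hence, up to finitely many trivial members, $X\subset q\mathbb{Z}_+$, and we may restrict to $n=qk$ with $k$ large; there are infinitely many such $k$ by hypothesis.

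\textbf{Step 2 (window).} We write $[L^{qk}_+,L]=[L,L^{qk}_-]$. Exponents of $L^{qk}_-$ are negative and those of $L$ are at least $N_M$. Exponents of the left side are at least $N_M$ (since $L^{qk}_+$ has nonnegative exponents), while those of the right side are at most $N-1$. Combined with Step 1, the only possible exponents are $N-jp$ for $1\le j\le m$, using $m=(N-N_M)/p$.

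\textbf{Step 3 (nonvanishing).} Let $c^{(k)}_{l}$ denote the coefficient of $\mathcal{S}^{l}$ in $L^{qk}$. The coefficient of $\mathcal{S}^{N-jp}$ in $[L,L^{qk}_-]$ equals
\[
(\mathcal{S}^N-1)\big(c^{(k)}_{-jp}\big)+\sum_{i}\Big(u_i\,\mathcal{S}^{N_i}\big(c^{(k)}_{N-jp-N_i}\big)-c^{(k)}_{N-jp-N_i}\,\mathcal{S}^{N-jp-N_i}(u_i)\Big),
\]
where the sum runs over those $i$ with $N-jp-N_i<0$. By Lemma 8, $c^{(k)}_{-jp}\neq0$ for $k$ large; it is a nonconstant polynomial with positive coefficients, so $(\mathcal{S}^N-1)c^{(k)}_{-jp}\neq 0$. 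The main obstacle is excluding cancellation with the remaining terms. I would first try an ordering by maximal shift. A monomial in $c^{(k)}_l$ is a product $u_{i_1}\mathcal{S}^{e_1}(u_{i_2})\cdots$ whose shifts are partial sums of exponents of $L$. Taking the monomial of $\mathcal{S}^N(c^{(k)}_{-jp})$ whose largest shift index is maximal, I would show that this shift exceeds every shift occurring in the other terms, since those carry at most a shift $N_i<N$ or come unshifted. If that fails, a fallback is to specialize $u_i\mapsto 0$ for all $i$ except $u_M$ and one other generator, which kills most cross terms, and then redo the positivity count of Lemma 8 in that reduced model.

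\textbf{Step 4 (conclusion).} For infinitely many $k$ with $qk\in X$, Step 3 gives $N-jp\in supp([L^{qk}_+,L])\subset supp(L)$ for every $1\le j\le m$. Since also $supp(L)\subset N-p\{0,\dots,m\}$ by the definition of $p$ and $m$, it follows that $supp(L)=\{N,N-p,\dots,N-mp\}$, which is arithmetic.
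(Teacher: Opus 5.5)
Your reduction in Steps 1, 2 and 4 is the same as the paper's. There is a genuine gap in Step 3, which carries the entire content of the lemma: it is not proved, and the mechanism you propose for it is wrong.

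The cross terms are not shifted copies of $c^{(k)}_{-jp}$. They are $u_i\mathcal{S}^{N_i}(c^{(k)}_{-(j-d_i)p})$ and $-c^{(k)}_{-(j-d_i)p}\,\mathcal{S}^{(d_i-j)p}(u_i)$ with $d_i=(N-N_i)/p<j$. These are built from coefficients of $L^{qk}$ closer to the diagonal, whose monomials can involve one more factor $\mathcal{S}^N$ and so carry larger internal shifts. The bound ``at most a shift $N_i<N$'' therefore controls nothing, and a negatively signed cross term can reach the top shift and kill a top monomial.

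Take $L=\mathcal{S}+u+v\mathcal{S}^{-1}+w\mathcal{S}^{-2}$ (so $p=q=1$, $m=3$), with $qk=2$ and $j=2$. Then $c_{-1}=\mathcal{S}(w)+uv+v\mathcal{S}^{-1}(u)+w$ and $c_{-2}=uw+v\mathcal{S}^{-1}(v)+w\mathcal{S}^{-2}(u)$. The monomial $\mathcal{S}(w)\,\mathcal{S}^{-1}(u)$ of $\mathcal{S}(c_{-2})$ has maximal shift $1$, and it is cancelled exactly by $-c_{-1}\mathcal{S}^{-1}(u)$. Meanwhile $u\,c_{-1}\ni u\,\mathcal{S}(w)$ also has shift $1$. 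This is not a small-$k$ accident: a top monomial is cancelled in the same way for every $qk\equiv 2 \pmod 3$. So your claim that the top shift of $\mathcal{S}^N(c^{(k)}_{-jp})$ ``exceeds every shift occurring in the other terms'' is false, and an arbitrarily chosen top monomial need not survive.

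What is missing is a real no-cancellation argument. A maximal-shift argument can be made to work, but only with finer bookkeeping. Expand $L^{qk+1}$ as a sum over words in the terms of $L$. Then the $\mathcal{S}^{N-jp}$-coefficient of $[L,L^{qk}_-]$ is the sum over words whose first exponent exceeds $N-jp$, minus the sum over words whose last exponent exceeds $N-jp$. Words satisfying both conditions cancel identically. You must then take the top monomial among words whose first exponent is $>N-jp$ and whose last exponent is $\le N-jp$, placing the positive letters first. Finally you must check that no word of the opposite type reaches that shift: its first letter is negative if $N-jp<0$, and its positive last letter enters no prefix sum if $N-jp\ge 0$.

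Your fallback is closer to the paper's route, but it is underspecified in three ways.
\begin{itemize}
\item The paper first obtains $N-p\in\mathrm{supp}(L)$ from $j=1$. There your cross sum is empty and the coefficient is just $(1-\mathcal{S}^N)(F_1(k))$.
\item Only then does it specialize to $\bar L=\mathcal{S}^N+u_1\mathcal{S}^{N-p}+u_M\mathcal{S}^{N-mp}$. This keeps the gcd of the gaps equal to $p$, so Lemma 8 still applies; keeping some $u_i$ with $\gcd(d_i,m)>1$ instead would not.
\item Even after specializing, the $u_1$ cross terms remain. The paper still needs a shift-filtration argument when $N\ge p$ and a separate first-gap argument when $N<p$.
\end{itemize}
None of this is supplied in your Step 3.
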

\begin{proof}
    If $n$ is not divisible by $q$, then the support of $L$ and the support of $[L^n_+,L]$ are orthogonal. So if such a $n$ is in $X$ we must have $[L^n_+,L]=0$. However, the set of $[L^n_+,L]$ with $n \in X$ spans an infinite-dimensional subspace of  $\mathcal{A}[\mathcal{S}, \mathcal{S}^{-1}]$. Therefore there are infinitely many integers divisible by $q$ in $X$. Our goal is to prove now that necessarily we must have 
    \[ supp(L)=\{N,N-p,...,N-mp\}.\]
    By Lemma $8$, we know that for every $k$ large enough, the coefficients of $\mathcal{S}^{-p}, ..., \mathcal{S}^{-mp}$  in $L^{qk}$ which will denote by $F_1(k),...,F_m(k)$ are nonzero elements of $\mathcal{A}$ with positive coefficients. For $j=1,...,m$, the coefficient of $S^{N-jp}$ in $[L^{qk}_{-},L]$ is 
    \[ (1-\mathcal{S}^N)(F_j(k))+\sum_{i,  d_i < j} (\mathcal{S}^{(d_i-j)p}(u_i) - u^i\mathcal{S}^{N-d_ip})(F_{j-d_i}(k)). \]
    In particular, for $j=1$, this collapses to the first term which is not zero if and only if $F_1(k)$ is nonzero. This means that $d_1=1$ and $N-p \in supp(L)$.
    \\
    \\
    \underline{Case 1: $N \geq p$}
      \, \, The difference operator 
    \[\bar{L}=\mathcal{S}^N+u_1\mathcal{S}^{N-p}+u_M\mathcal{S}^{N-mp}\] also satisfies the conditions of Lemma $8$, which means that for large enough $k$, the specialization of $F_i(k)$ to $u_2=...=u_{M-1}=0$ are still all nonzero with positive coefficients. The coefficent of $S^{N-jp}$ with $m>j >1$ in $[{\bar{L}}^{qk}_{-},\bar{L}]$ is 
    \[ (1-\mathcal{S}^N)(F_j(k))+ (\mathcal{S}^{(1-j)p}(u_1) - u_1\mathcal{S}^{N-p})(F_{j-1}(k)). \]
    Define a $\mathbb{Z}$-filtration on the difference algebra generated by $u_1$ and $u_M$ by declaring that 
    \[ d'(\mathcal{S}^n(u^j))=n,\]
    for all $n \in \mathbb{Z}$.
     The top component of our $\mathcal{S}^{N-jp}$ coefficient is the top component of 
    \[-\mathcal{S}^N(F_j(k)) - u_1\mathcal{S}^{N-p}(F_{j-1}(k)).\]
    In particular it is a sum of negative terms, hence it is nonzero for large $k$. Therefore the coefficient of $S^{N-jp}$ in $[{\bar{L}}^{qk}_{-},\bar{L}]$ is nonzero, which implies that the coefficient of $S^{N-jp}$ in $[L^{qk}_{-},L]$ is nonzero. This 
    forces $N-jp$ to belong in $supp(L)$ for all $1<j<m$ which ends the proof in that case.
    \\
    \\
    \underline{Case 2: $N <p$} 
    \\
    In that case we use a different strategy. Suppose that the support of $L$ is not arithmetic and let $(N_i,N_{i+1})$ be the first jump that is greater than $p$. We specialize our algebra to $u_{i+2}=u_{i+3}=...u_M=0$. Using Lemma $8$, we see that the coefficient of $\mathcal{S}^p$ in $L^{qk}$  is nonzero for large $k$ after performing that specialization. This means that the coefficient of $\mathcal{S}^{N-d_{i+1}p+p}$ in $[L^{qk}_+,L]$ can never be zero which contradicts the existence of that gap. Indeed if it were zero we would need to have 
    \[ F(k)\mathcal{S}^p(u_{i+1})=u_{i+1}\mathcal{S}^{N-d_{i+1}p}(F(k)).\]
    This forces $F(k)$ to lie in a finite dimensional subspace of $\mathcal{A}$. This is impossible since they are all nonzero for large $k$ and their grading \eqref{grading} is $qkN-p$.
\end{proof}

\begin{proof} (\emph{of Theorem $4$})
    By Lemmas $7$ and $9$, the support of $L$ must be arithmetic. Conversely, the evolutionary derivations \eqref{eqeqeq} are well-defined and commute by a standard argument. For each pair of integers $k,k' \geq 1$, $\partial_k(h_{k'})$ lies in the image of $\mathcal{S}-1$ by Lemma $3$ since it is the residue of a commutator.
\end{proof}
\vspace{5 mm}

\begin{example}
The Toda lattice corresponds to the choice of parameters $(N=1,p=1,m=2)$, while the Narita-Itoh-Bogoyavlensky lattice corresponds to $(N=1,p,m=1)$ for any $p >2$. Let us now consider the Lax operator 
\[ L=\mathcal{S}+\mathcal{S}^p(u)\mathcal{S}^{1-p}+v \mathcal{S}^{1-2p}, \, \, p >1\]
which provides a common generalization of the Toda lattice and the Narita-Itoh-Bogoyavlensky hierarchy.
We write the first coefficient in the form $\mathcal{S}^p(u)$ in order to obtain a more symmetric system and to follow the traditional Toda notation (\cite{F74},\cite{KMW13}).
The first nontrivial flow is obtained from the auxiliary operator  
\[ (L^p)_+=\mathcal{S}^p+(\mathcal{S}^p+\mathcal{S}^{p+1}+...+\mathcal{S}^{2p-1})(u),\]
which leads to the system of differential-difference equations
\[
\begin{cases}
{u}_t=(1-\mathcal{S}^{-p})(v)+u(\mathcal{S}^{p-1}+...+\mathcal{S}-\mathcal{S}^{-1}-...-\mathcal{S}^{1-p})(u),\\[0.4em]
v_t= v(\mathcal{S}^{2p-1}+...+\mathcal{S}^p-1...-\mathcal{S}^{1-p})(u).
\end{cases}
\]
For $p=1$ this reduces to the Toda lattice, while for $v=0$ one recovers the Narita-Itoh-Bogoyavlensky lattice. By Theorem $4$, this system belongs to an infinite hierarchy of commuting Lax flows
\[\partial_k(L)=[(L^{pk})_+,L], \, \, k \geq 1.\]
Moreover, this hierarchy possesses infinitely many conserved densities given by 
\[h_k= \text{ Res } L^{pk}, \, \, k \geq 1.\]
The Hamiltonian structure of this family appears to be unknown.
\end{example}
    
\section{Negative-Order Lax Operators}
        In this section we consider the commutative algebra $\mathcal{A}$ over $\mathbb{C}$ of difference polynomials freely generated by the coefficients of
        \[ L=\mathcal{S}^{-N}+u^1 \mathcal{S}^{-N_1}+...+u^M \mathcal{S}^{-N_M}\]
        where $0<N < N_1 < ... < N_M$. We define the integers
        \[ p=gcd(N_1-N,...,N_M-N), \, \, q=p/gcd(p,N), \, \, m=(N_M-N)/p,\]
        and a grading on the space of difference operators $\mathcal{A}[\mathcal{S}, \mathcal{S}^{-1}]$ by
        \begin{equation}\label{gradingd}
        d(\mathcal{S})=1, \, d(\mathcal{L})=-N  .
        \end{equation} 
 We now prove Theorem $1$ in the case $N<0$.
        \begin{theorem}
            Suppose that $L$ admits infinitely many linearly independent Lax flows. Then the support of $L$ is arithmetic. Moreover, there exists a unique hierarchy of commuting Lax flows on $\mathcal{A}$
            \begin{equation}\label{eqeqeqeq}
            \partial_k(L)=[(L^{-qk})_+,L]\, \, k >0.
            \end{equation} 
            This hierarchy possesses infinitely many conserved densities given by 
            \[ h_k= \text{ Res } L^{-qk}, \, \, k >0.\]
        \end{theorem}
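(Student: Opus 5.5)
I will mirror the structure of the proof of Theorem 4, with the roles of $\mathcal{S}$ and $\mathcal{S}^{-1}$ exchanged where needed. Here $L=\mathcal{S}^{-N}(1+\dots)$ has all its exponents $\le -N<0$. Its inverse $L^{-1}$ is a pseudodifference operator of positive order $N$, and it is homogeneous of degree $N$ for the grading \eqref{gradingd}.

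\textbf{Step 1: reduction to $\mathbb{C}[L^{-1}]$.} First I prove the analogue of Lemma 5. Given $A$ with $supp([A,L])\subset supp(L)$, I look at the \emph{highest} exponent term $a\mathcal{S}^k$ of $A$. This is the natural extremal term now, because $L$ has its top exponent $-N$ with coefficient $1$. The term $[a\mathcal{S}^k,\mathcal{S}^{-N}]=(a-\mathcal{S}^{-N}(a))\mathcal{S}^{k-N}$ lies above the support unless $k\le 0$ or $a$ is $\mathcal{S}^{-N}$-invariant, hence constant. Together with the freeness argument at the bottom term $u^M\mathcal{S}^{-N_M}$, as in Lemma 5, this lets me strip off terms that commute with $L$. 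Those are constant multiples of $\mathcal{S}^{-kN}$-type leading parts of powers of $L$.

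Next, the analogue of Lemma 6. I expand a homogeneous $A$ in powers of $L^{-1}$ with coefficients of bounded order, $A=\sum_i B_iL^{-i}$. The same leading-term argument, now using $(1-\mathcal{S}^{-N})$ in place of $(1-\mathcal{S}^{N})$ and the non-membership of $(\mathcal{S}^n-1)(u^1)$ in $\mathrm{Im}(\mathcal{S}^{-N}-1)$, shows that $A_+=b(L^{-I})_+$ up to a term commuting with $L$. The analogue of Lemma 7 then yields an infinite set $X$ of $n$ with $supp([L^{-n}_+,L])\subset supp(L)$, and these commutators span an infinite-dimensional space. Since $L^{-n}$ is a series in $\mathcal{S}^p$ times $\mathcal{S}^{nN}$, the orthogonality argument of Lemma 9 forces infinitely many $n\in X$ to be multiples of $q$.

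\textbf{Step 2: nonvanishing of coefficients (main obstacle).} In Lemma 8 positivity came for free, because $L^{qk}$ is a sum of monomials with positive coefficients. For $L^{-qk}$ this fails: the geometric series has alternating signs, so cancellations are possible. Following the strategy announced in the introduction, I pass to a polynomial model. I kill all shifts by specializing to a $\mathcal{S}$-invariant quotient (constant coefficients $u^i\mapsto c_i$), which is harmless since nonvanishing after specialization implies nonvanishing before. The coefficient of $\mathcal{S}^{-jp}$ in $L^{-qk}$ then becomes the coefficient of $x^{qkN/p\,-j}$ in a Laurent expansion of $P(x)^{-qk}$, up to sign. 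Here $P(x)=1+\sum c_ix^{d_i}$ with $d_i=(N_i-N)/p$ is a sparse polynomial, and this coefficient is a residue $\mathrm{Res}\,x^{j-1-qkN/p}P(x)^{-qk}$.

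I then show that, for generic $c_i$ and large $k$, these residues are nonzero for each fixed $j$. My first attempt is a saddle-point asymptotic of $\oint x^{-a}P(x)^{-qk}dx$ as $k\to\infty$ with $a/k$ fixed, which is dominated by one nondegenerate critical point for generic $c_i$. My fallback is to observe that the residue is a nonzero polynomial in the $c_i$, exhibiting one monomial via a Bezout argument as in Lemma 8 that cannot cancel.

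\textbf{Step 3: arithmeticity and conclusion.} With $F_j(k)\neq0$ in hand, I repeat Lemma 9. For $j=1$, the coefficient of $\mathcal{S}^{-N-p}$ in $[L^{-qk}_-,L]$ is $(1-\mathcal{S}^{-N})(F_1(k))\ne0$, which forces $N+p\in -supp(L)$. For the intermediate exponents I use Case 1 or Case 2 of Lemma 9 verbatim, with the top-component filtration $d'$ and the gap argument. In the gap argument, $F(k)\mathcal{S}^p(u)=u\,\mathcal{S}^{\cdot}(F(k))$ forces $F(k)$ into a finite-dimensional space, contradicting the growing grading $-qkN-p$. This establishes arithmeticity. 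Finally, the flows \eqref{eqeqeqeq} are well defined because $[L^{-qk}_+,L]=-[L^{-qk}_-,L]$ has support bounded on both sides, hence inside the arithmetic support. They commute by the standard Zakharov--Shabat argument, and $\partial_k h_{k'}$ is the residue of a commutator, so it lies in $\mathrm{Im}(\mathcal{S}-1)$ by Lemma 3.
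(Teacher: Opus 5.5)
Your architecture matches the paper's: reduction to $[(L^{-n})_+,L]$ (the paper's Lemmas 11 to 13), orthogonality forcing multiples of $q$, nonvanishing of coefficients of $L^{-qk}$, a gap argument, and the two-sided support bound plus Lemma 3 for the hierarchy. In Step 2, use your fallback, because it is exactly the paper's Lemma 14. Keep the $c_i$ as indeterminates (the paper's specialization $\mathcal{S}^j(u_i)\mapsto v_i$). Expand $L^{-qk}=\mathcal{S}^{qkN}\sum_{r\ge 0}(-1)^r\binom{qk+r-1}{r}Q^r$. A monomial of total degree $r$ in the $c_i$ can only come from $Q^r$, whose coefficients are all positive, so the alternating signs never cancel and nonvanishing reduces to membership in the numerical semigroup generated by the $d_i$. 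This degree separation is lost once you substitute numerical constants. So the saddle-point route is unnecessary and, as stated, unproven; the paper itself records that it cannot handle the specialization $u_i=1$.

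The genuine problem is in Step 3, where you propose to use Case 1 of Lemma 9 verbatim, with the filtration $d'$, for the intermediate exponents. That argument works only because the extreme component of $-\mathcal{S}^N(F_j)-u_1\mathcal{S}^{N-p}(F_{j-1})$ is a sum of terms of one sign, which uses positivity of all coefficients of $L^{qk}$. Take $\bar L=\mathcal{S}^{-N}+u_1\mathcal{S}^{-N-p}+u_M\mathcal{S}^{-N-mp}$. In $\bar L^{-qk}$, a monomial of bidegree $(a,b)$ in $(u_1,u_M)$ carries the sign $(-1)^{a+b}$. So in the coefficient $(1-\mathcal{S}^{-N})(F_j)+F_{j-1}\mathcal{S}^{-(j-1)p}(u_1)-u_1\mathcal{S}^{-N-p}(F_{j-1})$, the contributions of a fixed bidegree coming from $F_j$ and from $u_1\cdot F_{j-1}$ have opposite signs in whichever extreme component you take. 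The argument therefore proves nothing here, for exactly the reason you identified yourself in Step 2.

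The repair, which is what the paper does in Lemma 15, is to drop Case 1 and run the gap argument for every $N$.
\begin{itemize}
\item Once $N_1=N+p$ is known from $(1-\mathcal{S}^{-N})(F_1(k))\neq0$, let $N_{i+1}>N+(i+1)p$ be the first jump.
\item Truncate to $\bar L=\mathcal{S}^{-N}+u_1\mathcal{S}^{-N_1}+\dots+u_{i+1}\mathcal{S}^{-N_{i+1}}$.
\item The coefficient of $\mathcal{S}^{p-N_{i+1}}$ in $[(\bar L^{-qk})_+,\bar L]$ is exactly $\bar F(k)\mathcal{S}^p(u_{i+1})-u_{i+1}\mathcal{S}^{-N_{i+1}}(\bar F(k))$, where $\bar F(k)$ is the coefficient of $\mathcal{S}^p$ in $\bar L^{-qk}$. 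No other pair of exponents lands there: pairing $\mathcal{S}^{lp}$ with $\mathcal{S}^{-N_s}$ would need $N_s=N_{i+1}+(l-1)p$, which is excluded for $l=0$ by the gap and for $l\ge2$ by the truncation.
\item This coefficient is nonzero by your divisibility argument, since $\bar F(k)\neq0$ by Step 2 applied to $\bar L$.
\item The exponent $p-N_{i+1}$ lies strictly between $-N_{i+1}$ and $-N_i$, hence outside $supp(L)$, which is a contradiction.
\end{itemize}
No case distinction is needed, because every truncation of a negative-order $L$ still has only negative exponents and normalized gaps with $\gcd$ one (as $d_1=1$), with the same $p$ and $q$. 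The nonvanishing lemma therefore always applies to $\bar L$. A minor correction: $\bar F(k)$ has degree $qkN-p$, not $-qkN-p$.
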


     \begin{lemma}
   Suppose that $A \in \mathcal{A}[\mathcal{S}, \mathcal{S}^{-1}]$ is such that $supp([A,L]) \subset supp(L)$. Then there exists $B \in \mathcal{A}[\mathcal{S}]$ such that $[B,L]=[A,L]$.
\end{lemma}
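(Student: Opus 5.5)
The plan is to repeat the argument of Lemma $5$ almost verbatim. The only change is that the lowest term of $L$ is now $u^M\mathcal{S}^{-N_M}$, and every exponent of $L$ is negative. I argue by induction on the lowest exponent $k$ of $A$, whose lowest term I write $a\mathcal{S}^k$. If $k\geq 0$ there is nothing to prove, so assume $k<0$.

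The first step is to isolate the lowest term of $[A,L]$. Its lowest possible exponent is $k-N_M$, and the only contribution there is $[a\mathcal{S}^k,u^M\mathcal{S}^{-N_M}]$. Since $k<0$, we have $k-N_M<-N_M$, which lies outside $supp(L)$, so this commutator must vanish:
\[ a\,\mathcal{S}^k(u^M)=u^M\,\mathcal{S}^{-N_M}(a). \]

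The second step is the divisibility argument in the polynomial ring $\mathcal{A}=\mathbb{C}[\mathcal{S}^j(u^i)]$, which is a UFD whose variables $\mathcal{S}^j(u^i)$ are pairwise non-associate primes. Let $n\geq -1$ be maximal such that $u^M\mathcal{S}^{-N_M}(u^M)\cdots\mathcal{S}^{-nN_M}(u^M)$ divides $a$, with $n=-1$ meaning the empty product, and let $\tilde a$ be the quotient. Substituting and cancelling gives
\[ \tilde a\,\mathcal{S}^k(u^M)=\mathcal{S}^{-(n+1)N_M}(u^M)\,\mathcal{S}^{-N_M}(\tilde a). \]
Suppose $k\neq -(n+1)N_M$. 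Then the prime $\mathcal{S}^{-(n+1)N_M}(u^M)$ does not divide $\mathcal{S}^k(u^M)$, so it divides $\tilde a$, contradicting the maximality of $n$. Hence $k=-(n+1)N_M$, and $\tilde a=\mathcal{S}^{-N_M}(\tilde a)$. Since $N_M\neq 0$ and the generators are free, the only elements of $\mathcal{A}$ fixed by $\mathcal{S}^{-N_M}$ are constants, so $\tilde a\in\mathbb{C}$.

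The third step compares with powers of $L$. The lowest term of $L^{n+1}$ is
\[ u^M\mathcal{S}^{-N_M}(u^M)\cdots\mathcal{S}^{-nN_M}(u^M)\,\mathcal{S}^{-(n+1)N_M}, \]
so $\tilde aL^{n+1}$ has the same lowest term as $A$. Moreover $[A,L]=[A-\tilde aL^{n+1},L]$, and $A-\tilde a L^{n+1}$ still lies in $\mathcal{A}[\mathcal{S},\mathcal{S}^{-1}]$. Its lowest exponent is strictly greater than $k$, since the top exponent of $L^{n+1}$ is $-(n+1)N\geq k$. Because the lowest exponent strictly increases at each step, after finitely many steps it becomes nonnegative, which gives the required $B\in\mathcal{A}[\mathcal{S}]$.

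The only point needing care is the uniqueness claim in the first step: no other product of a term of $A$ with a term of $L$ reaches exponent $k-N_M$. This holds because $k$ is minimal in $A$ and $-N_M$ is minimal in $L$.
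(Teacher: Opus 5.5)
Your proof is correct and is essentially the paper's argument: the paper simply states that the induction of Lemma~5 carries over verbatim, and your proof carries it out. The paper adds one remark: since every power of $L$ has purely negative support here, the subtracted terms $\tilde a L^{n+1}$ assemble into $A_-$, so $A_-$ is a polynomial in $L$ and one may take $B=A_+$. Your induction produces exactly this $B$.
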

\begin{proof} 
The proof is identical to that of Lemma $5$. The condition  $supp([A,L]) \subset supp(L)$ forces $A_{-}$ to be of a polynomial in $L$, which commute with $L$, so 
\[ [A,L]=[A_+,L].\]
\end{proof}
\begin{lemma}
    Let $A \in \mathcal{A}[\mathcal{S}]$ be such that $supp([A,L]) \subset supp(L)$. Then there exists $B \in \mathbb{C}[L^{-1}]$ such that $A=B_+$.
\end{lemma}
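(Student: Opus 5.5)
The plan follows the proof of Lemma $6$, now with $L$ of negative order $-N$. Set $\Lambda=L^{-1}$, the pseudodifference inverse provided by Lemma $2$. Then $\Lambda=\mathcal{S}^{N}+\dots$ is a monic pseudodifference operator of positive order $N$ that commutes with $L$. Since $L$ is homogeneous of degree $-N$ for the grading \eqref{gradingd}, every homogeneous component of $A$ again satisfies $supp([A,L])\subset supp(L)$, so I may take $A$ homogeneous. I will work with $[A,\Lambda]=-\Lambda[A,L]\Lambda$. Its support lies in $supp(L)+2N+\{\text{nonpositive integers}\}$, so its order is at most $N_1\cdot(-1)+2N$; for $\Lambda$ this means the order of $[A,\Lambda]$ is strictly less than $2N-1$ whenever $N_1>N+1$. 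It is cleaner, though, to stay with $L$ directly and expand $A$ in powers of $\Lambda$ via Euclidean division:
\[ A=\sum_{i\le I} B_i\Lambda^i, \]
where the $B_i\in\mathcal{A}[\mathcal{S}]$ have exponents in $[0,N-1]$. This expansion exists and is unique because $\Lambda$ is monic of order $N>0$. It gives $[A,L]=\sum_i [B_i,L]\Lambda^i$.

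The argument compares top-order terms. Let $b\mathcal{S}^n$ be the leading term of $B_I$. The top contribution to $[A,L]$ is
\[ (b-\mathcal{S}^{-N}(b))\,\mathcal{S}^{\,n-N+NI}, \]
and its order $n-N+NI$ is generally far above $-N$, the top of $supp(L)$. Therefore $(1-\mathcal{S}^{-N})(b)=0$, which forces $b\in\mathbb{C}$ because only constants are $\mathcal{S}^{N}$-invariant in the free difference algebra. Using homogeneity, I can then write
\[ A=b\,\mathcal{S}^n\Lambda^I+\sum_{i\le I} C_i\Lambda^i, \]
where no coefficient of any $C_i$ is constant. The next step is to compute the leading parts of the two commutators:
\[ [\mathcal{S}^n\Lambda^I,L]=(\mathcal{S}^n-1)(u^1)\,\mathcal{S}^{\,n-N_1+NI}+\dots, \]
while the $C$-part has leading coefficient of the form $(1-\mathcal{S}^{-N})(X)$. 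If $n>0$, these two leading terms cannot cancel, because $(\mathcal{S}^n-1)(u^1)$ is not in the image of $\mathcal{S}^{-N}-1$ when $0<n<N$. The order of the $b$-term would then exceed $-N$ as soon as $I\ge 0$, which is a contradiction. Hence $n=0$. The order of the $C$-term must then be at most $-N$, and this forces the top index $J$ of the $C_i$ to be negative, so $\big(\sum C_i\Lambda^i\big)_+$ contributes nothing. This gives $A_+=b(\Lambda^I)_+=b(L^{-I})_+$, so $B=bL^{-I}\in\mathbb{C}[L^{-1}]$.

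I expect the main obstacle to be the bookkeeping of orders. In Lemma $6$ the inequalities $n+N_1+NI>N_1$ and $k+(J+1)N\le N_1$ were used; here the analogous inequalities must be checked with $L$ of order $-N$ and $\Lambda$ of order $+N$. In particular, the case $I<0$ has to be treated separately: there $\Lambda^I$ is a genuine negative-order series and $A$ is already a difference operator of low order, so the claim should reduce to showing that $A$ is a constant, which is $(L^{0})_+$ up to scalar. I would first verify the image argument, that $(\mathcal{S}^n-1)(u^1)\notin\mathrm{Im}(\mathcal{S}^{N}-1)$ for $0<n<N$, by looking at the linear part in the generators. I would then check that any residual $\Lambda^I$ with $I<0$ has $(\Lambda^I)_+$ equal to $0$ or to a constant.
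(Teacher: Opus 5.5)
Your strategy is the paper's: expand $A=\sum_{i\le I}B_i\Lambda^i$, show the leading coefficient $b$ of $B_I$ is constant, exclude $n>0$ because $(\mathcal{S}^n-1)(u^1)\notin\mathrm{Im}(\mathcal{S}^N-1)$, and show the top index $J$ of the non-constant part is negative. Your linear-part check for the image claim is a good justification that the paper omits.

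The gap is the order bookkeeping, which you yourself flagged. You bound the order of $[A,L]$ by $-N$, ``the top of $supp(L)$''. The argument actually needs the bound $-N_1$, i.e. that the $\mathcal{S}^{-N}$-coefficient of $[A,L]$ vanishes. With the bound $-N$, three of your steps fail:
\begin{itemize}
\item For $I=n=0$ the top term is $(1-\mathcal{S}^{-N})(b)\mathcal{S}^{-N}$. It is not above $-N$, so nothing forces $b\in\mathbb{C}$.
\item For $I=0$ and $0<n<N$ the $b$-term has order $n-N_1$. This does not exceed $-N$ when $n\le N_1-N$, e.g. $N=2$, $N_1=4$, $A=\mathcal{S}$.
\item ``Order of the $C$-term at most $-N$'' only gives $k+NJ\le 0$. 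This allows $J=k=0$, i.e. a non-constant $X$ in the $\mathcal{S}^0$ slot, which breaks $A_+=b(\Lambda^I)_+$.
\end{itemize}
This is not cosmetic: if $-N$ is allowed in the support of $[A,L]$, the statement is false. For any $f\in\mathcal{A}$,
\[ [f,L]=(1-\mathcal{S}^{-N})(f)\,\mathcal{S}^{-N}+\sum_j u^j(1-\mathcal{S}^{-N_j})(f)\,\mathcal{S}^{-N_j} \]
has support in $supp(L)$. Yet $f=B_+$ with $B\in\mathbb{C}[L^{-1}]$ only when $f$ is constant.

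The missing ingredient is that $[A,L]=\partial(L)$ for a Lax flow. Since $\partial$ kills the leading coefficient $1$, $[A,L]$ has order at most $-N_1$; this is the bound the paper's argument uses. With it every step closes:
\begin{itemize}
\item $n-N+NI\ge -N>-N_1$ forces $b\in\mathbb{C}$.
\item For $n>0$ we have $n-N_1+NI>-N_1$, so the $b$-term must be cancelled at exactly that order by the $C$-part. If the $C$-part's leading order were higher, its leading coefficient $(1-\mathcal{S}^{-N})(X)\neq 0$ would itself be an uncancelled term above $-N_1$. Exact cancellation is what your image argument excludes.
\item $k-N+NJ\le -N_1<-N$ gives $k+NJ<0$, hence $J<0$.
\end{itemize}

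Finally, your separate case $I<0$ is vacuous. A nonzero $A\in\mathcal{A}[\mathcal{S}]$ has order at least $0$, so $I=\lfloor \mathrm{ord}(A)/N\rfloor\ge 0$. The delicate case is $I=0$, and that is exactly where the choice of threshold matters.
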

\begin{proof}
    Since $L$ is homogeneous for the grading \eqref{gradingd}, every homogeneous component of $A$ must satisfy the Lemma hypothesis. Therefore we may assume that $A$ itself is homogeneous. There exists a unique family $B_i \in \mathcal{A}[\mathcal{S}]$ of order at most $N-1$ such that 
    \[ A= \sum_{i \leq I} B_i L^{-i}. \]
    We have 
     \[ [A,L]= \sum_{i \leq I} [B_i,L] L^{-i}. \]
     For any $i$, the order of $[B_i,L]$ is at most $-1$. Hence, if the leading coefficient of $B_I$ is not a constant, the leading coefficient of $[A,L]$ will have strictly greater order than $-N_1$ which is a contradiction. We may therefore assume that 
     \[ A= b \mathcal{S}^n L^{-I}+\sum_{i \leq I} C_i L^{-i}, \]
     where no coefficient of any $C_i$ is a constant by homogeneity of $A$. We have 
     \[ [b \mathcal{S}^n L^{-I}, L]=b(\mathcal{S}^p-1)(u^1)S^{n-N_1+NI}+ o(n-N_1+NI).\]
     Moreover, if $J$ is the largest index such that $C_J \neq 0$ and $X \mathcal{S}^k$ is the leading term of $C_J$, then 
     \[ [\sum_{i \leq I} C_i L^{-i},L]=(1-\mathcal{S}^{-N})(X)S^{k+(J-1)N}+ o(k+(J-1)N).\]
      We know by assumption that the sum of these two commutators should have order at most $-N_1$. If $n>0$, $n-N_1+NI>-N_1$ no matter $I \geq 0$ so the leading term of both commutators must cancel. However, this is impossible since $(\mathcal{S}^n-1)(u_1)$ cannot be in the image of $\mathcal{S}^N-1$. Hence $n=0$ and the leading term of $[A,L]$ is given uniquely by the second term. We need $k+(J-1)N \leq -N_1 <-N$, hence $J$ is negative. We have therefore proved that 
      \[ A_+=b L^{-I}_+.\]
\end{proof}
\begin{lemma}
    Suppose that there is a sequence $(A_k)_{k >0} \in \mathcal{A}[\mathcal{S}, \mathcal{S}^{-1}]$ such that 
    \begin{enumerate}
        \item $supp([A_k,L]) \subset supp(L)$,
        \item $[A_k,L]$ spans an infinite-dimensional subspace of $\mathcal{A}[\mathcal{S}, \mathcal{S}^{-1}]$.
    \end{enumerate}
    Then the set $ X= \{ n \in \mathbb{N} | supp([L^{-n}_+,L]) \subset supp(L) \} $ is infinite and $[L^{-n}_+,L]$ with $n \in X$ span an infinite-dimensional subspace of  $\mathcal{A}[\mathcal{S}, \mathcal{S}^{-1}]$.
\end{lemma}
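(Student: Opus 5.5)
This is the negative-order counterpart of Lemma $7$, and I would follow that proof, using Lemmas $10$ and $11$ in place of Lemmas $5$ and $6$.

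\emph{Step 1: reduce to auxiliary operators in $\mathcal{A}[\mathcal{S}]$.} For each $k$, Lemma $10$ gives $B_k \in \mathcal{A}[\mathcal{S}]$ with $[B_k,L]=[A_k,L]$. The commutators are unchanged, so the sequence $(B_k)$ still satisfies both hypotheses:
\begin{enumerate}
\item $supp([B_k,L]) \subset supp(L)$,
\item the span of the $[B_k,L]$ is infinite-dimensional.
\end{enumerate}

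\emph{Step 2: express each $B_k$ through powers of $L^{-1}$.} By Lemma $11$, each $B_k$ equals $C_+$ for some $C \in \mathbb{C}[L^{-1}]$. I would write
\[ B_k=\sum_{n \in F_k} c_{k,n} L^{-n}_+ \]
with finitely many indices $F_k$ and constants $c_{k,n}$. This only needs Lemma $11$ as stated, but the decomposition must be refined so that each individual summand is itself a Lax auxiliary operator.

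\emph{Step 3: show every index appearing in $B_k$ lies in $X$.} This is the main obstacle, since Lemma $11$ only describes $B_k$ as a whole and not its individual summands. I would use the grading \eqref{gradingd}. Since $L$ is homogeneous of degree $-N$, the operator $L^{-n}_+$ is homogeneous of degree $nN$, and different $n$ give different degrees. The condition $supp([B,L])\subset supp(L)$ is compatible with the grading, as already used in the proof of Lemma $11$: if $B$ satisfies it, so does each homogeneous component of $B$. Hence each summand $c_{k,n}L^{-n}_+$ with $c_{k,n}\neq 0$ satisfies the support condition, that is, $n \in X$. The terms with $n=0$ contribute the constant operator, whose commutator with $L$ vanishes, so they can be dropped.

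\emph{Step 4: conclude by a dimension count.} By Step 3, every $[B_k,L]$ lies in the span of $\{[L^{-n}_+,L] : n\in X\}$. Because the span of the $[B_k,L]$ is infinite-dimensional, this span must also be infinite-dimensional, and in particular $X$ must be infinite.
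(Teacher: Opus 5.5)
Your proof is correct and follows the paper's argument: reduce to $B_k\in\mathcal{A}[\mathcal{S}]$, write each $B_k$ as a combination of the $L^{-n}_+$, and conclude by comparing spans. Your Step 3 makes explicit, via the grading as in the proof of the paper's Lemma 12, why each index that occurs lies in $X$, which the paper asserts without comment. The lemmas you cite as Lemmas 10 and 11 are numbered 11 and 12 in the paper, because Theorem 10 shares the counter.
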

\begin{proof}
    By Lemma $11$, we can replace the sequence $A_k \in \mathcal{A}[\mathcal{S}, \mathcal{S}^{-1}]$ by a sequence $B_k \in \mathcal{A}[\mathcal{S}]$ without changing the second condition. By Lemma $12$, every $B_k$ must be a linear combination of $L^{-n}_+$ for $n \in X$. Therefore $[L^{-n}_+,L]$ with $n \in X$ must span an infinite-dimensional subspace of  $\mathcal{A}[\mathcal{S}, \mathcal{S}^{-1}]$ by condition $2$.
\end{proof}
\begin{lemma}
    For $k$ large enough, the coefficients $F(k)$ and $G(k)$ of $\mathcal{S}^p$ and $\mathcal{S}^{-p}$ are nonzero in $L^{-qk}$.
\end{lemma}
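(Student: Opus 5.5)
The plan is to reduce the statement to a computation in a commutative power series ring. We specialize the coefficients to constants and then use the fact that, for a power of a polynomial with independent coefficients, distinct monomials cannot cancel.

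First, write $L=\mathcal{S}^{-N}\bigl(1+W\bigr)$ with
\[ W=\sum_{i=1}^M \mathcal{S}^{N}(u^i)\,\mathcal{S}^{-d_ip},\qquad d_i=(N_i-N)/p, \]
so that $\gcd(d_1,\dots,d_M)=1$ and $d_M=m$. Then $L^{-1}=(1+W)^{-1}\mathcal{S}^{N}$, and $(1+W)^{-1}$ is a well-defined series in $\mathcal{S}^{-p}$ by Lemma 2. Consequently $L^{-qk}$ equals $\mathcal{S}^{Nqk}$ times a series in $\mathcal{S}^{-p}$. Since $p$ divides $Nq$, the operator $L^{-qk}$ is a series in $\mathcal{S}^{p}$, exactly as in the first step of Lemma 8. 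The coefficients of $\mathcal{S}^{p}$ and $\mathcal{S}^{-p}$ therefore sit in degrees $j_\pm=Nqk/p\mp1$ of the series in $\mathcal{S}^{-p}$.

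The obstacle compared with Lemma 8 is the alternating signs of the geometric series. Positivity of the contributions is lost, and a priori words in the shifted $u^i$ could cancel. To avoid tracking noncommutative words, consider the $\mathbb{C}$-algebra morphism $\phi:\mathcal{A}\to\mathbb{C}[c_1,\dots,c_M]$ sending every $\mathcal{S}^j(u^i)$ to a formal constant $c_i$. This morphism is well defined because $\mathcal{A}$ is freely generated by the $\mathcal{S}^j(u^i)$. It intertwines $\mathcal{S}$ with the identity, so it extends to an algebra morphism from pseudodifference operators to the commutative ring $\mathbb{C}[c_1,\dots,c_M]((z^{-1}))$, with $\mathcal{S}\mapsto z$. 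Inverses are preserved, by uniqueness in Lemma 2. Hence $\phi(L^{-qk})=z^{Nqk}\bigl(1+\sum_i c_i x^{d_i}\bigr)^{-qk}$, where $x=z^{-p}$. It suffices to show that the coefficient of $x^{j_\pm}$ in this commutative series is a nonzero polynomial in the $c_i$, since $\phi(F(k))\neq0$ implies $F(k)\neq0$, and likewise for $G(k)$.

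Now expand with the negative multinomial theorem:
\[ \Bigl(1+\sum_i c_ix^{d_i}\Bigr)^{-n}=\sum_{a\in\mathbb{Z}_{\ge0}^M}(-1)^{|a|}\frac{(n+|a|-1)!}{(n-1)!\,a_1!\cdots a_M!}\,c_1^{a_1}\cdots c_M^{a_M}\,x^{\sum_i a_id_i}. \]
Distinct tuples $a$ give distinct monomials in the $c_i$, so no cancellation can occur. The coefficient of $x^{j}$ is therefore nonzero as soon as $j$ lies in the numerical semigroup generated by $d_1,\dots,d_M$. Because $\gcd(d_i)=1$, this semigroup contains every integer beyond its Frobenius number; alternatively, one can use the Bezout argument of Lemma 8, taking residues $s+l_sm$ and adding multiples of $m=d_M$. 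Unlike Lemma 8, there is no second constraint on $|a|$, because the exponent $-qk$ is negative and $|a|$ is unbounded. Since $j_\pm=Nqk/p\mp1\to\infty$ with $k$, both coefficients are nonzero for all large $k$, which proves the lemma. The only delicate point I expect is justifying that $\phi$ extends multiplicatively to pseudodifference operators. This holds because $\phi(\mathcal{S}^n(b))=\phi(b)$, so the product rule $a\mathcal{S}^nb\mathcal{S}^m=a\mathcal{S}^n(b)\mathcal{S}^{n+m}$ is mapped to the commutative product.
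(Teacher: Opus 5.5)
Your proposal is correct and follows essentially the same route as the paper: specialize every shift $\mathcal{S}^j(u^i)$ to a single commuting variable, expand $(1+Q)^{-qk}$ by the negative binomial/multinomial series, and rule out sign cancellation by noting that distinct terms carry distinct monomials. The paper phrases the non-cancellation through the degree-$r$ homogeneity of $Q^r$, and it also concludes from membership of $Nqk/p\mp1$ in the numerical semigroup generated by the coprime $d_i$. Your explicit check that the specialization is multiplicative on pseudodifference operators, and your use of $qk$ (rather than $k$) in the binomial coefficient, are welcome refinements of the same argument.
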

\begin{proof}
We prove that Lemma by establishing a weaker statement. Let $v_i$ be the generators of a commutative algebra of polynomials and let 
\[P(z)=z^{-N}+v_1z^{-N_1}+....+v_Mz^{-N_M}.\]
We will show that for large enough $k$, the coefficients of $z^p$ and $z^{-p}$ are nonzero in $P^{-qk}$. This will imply our Lemma after the specialization $\mathcal{S}^k(u_i)=v_i$ for all $k \in \mathbb{Z}$ and all $i$.
Let us write $P(z)=z^{-N}(1+Q(z))$. Then we have 
\[ z^{qkN}P^{-qk}(z)=\sum_{r \geq 0}  (-1)^r\binom{k+r-1}{r}Q^r(z).\]
Note that there cannot be any cancellation in that sum because each power $Q^r$ provides only degree $r$ homogeneous monomials with positive coefficients as polynomials in the generators $v_j$. Therefore the proof reduces to check that $Nkq+p$ and $Nkq-p$ belong to the additive semigroup generated by the gaps $N_i-N$ for $k$ large enough.  Since the normalized integers $(N_i-N)/p$ are coprime,  every large enough multiple of $p$, and in particular $Nkq+p$ and $Nkq-p$ for $k$ large enough, belong to the additive semigroup generated by the gaps $N_i-N$, which ends the proof.

\end{proof}
\begin{remark}
 We were unable to prove Lemma $14$ after the further specialization $u_i=1$.
\end{remark}
\begin{lemma}
       Suppose that $ X= \{ n \in \mathbb{N} | supp([L^{-n}_+,L]) \subset supp(L) \} $ is infinite and that $[L^{-n}_+,L]$ with $n \in X$ spans an infinite-dimensional subspace of  $\mathcal{A}[\mathcal{S}, \mathcal{S}^{-1}]$. Then $supp(L)$ is arithmetic.
\end{lemma}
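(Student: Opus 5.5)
We follow the strategy of Lemma $9$, adapted to negative order. The plan has three steps: first restrict to $n$ divisible by $q$, then show $-N-p\in supp(L)$ using the coefficient $G(k)$ of Lemma $14$, and finally rule out any larger gap using the coefficient $F(k)$.

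\textbf{Step 1: only multiples of $q$ matter.} Write $L=\mathcal{S}^{-N}M(\mathcal{S}^{-p})$. Then $L^{-n}$ is supported in $nN+p\mathbb{Z}$. If $q\nmid n$, then $nN\notin p\mathbb{Z}$, so $[L^{-n}_+,L]$ is supported in $(n-1)N+p\mathbb{Z}$. This set is disjoint from $-N+p\mathbb{Z}\supset supp(L)$. Hence $n\in X$ forces $[L^{-n}_+,L]=0$. Since the $[L^{-n}_+,L]$ with $n\in X$ span an infinite-dimensional space, infinitely many multiples $qk$ lie in $X$.

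\textbf{Step 2: $-N-p\in supp(L)$.} Use $[L^{-qk}_+,L]=-[L^{-qk}_-,L]$. The top-order contribution of $[L^{-qk}_-,L]$ comes from the term $G(k)\mathcal{S}^{-p}$ of $L^{-qk}$ (Lemma $14$) commuted with $\mathcal{S}^{-N}$. This gives the coefficient $(1-\mathcal{S}^{-N})(G(k))$ of $\mathcal{S}^{-N-p}$. All other contributions have order $\le -N-2p$, because $L^{-qk}_-$ has exponents in $-p\mathbb{Z}_{>0}$ and $L$ has exponents $\le -N$.

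By Lemma $14$, $G(k)\neq 0$. A nonzero element of the free difference algebra is not $\mathcal{S}^{-N}$-invariant, as in the proof of Lemma $5$. So this coefficient is nonzero. If $qk\in X$, we conclude $-N-p\in supp(L)$, that is, $N_1=N+p$.

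\textbf{Step 3: no larger gap.} Suppose the support is not arithmetic, and let $(N_i,N_{i+1})$ be the first gap with $N_{i+1}-N_i>p$. Specialize $u^{i+2}=\dots=u^M=0$. This is harmless: if a coefficient vanishes identically, it vanishes after specialization, so a coefficient that is nonzero after specialization is nonzero before it. Then look at the coefficient of $\mathcal{S}^{-N_{i+1}+p}$ in $[L^{-qk}_+,L]$. This exponent is not in the (unspecialized) support, since it lies strictly inside the gap.

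Its leading contributions come from pairing $F(k)\mathcal{S}^{p}$ with $u^{i+1}\mathcal{S}^{-N_{i+1}}$, where $F(k)$ is the $\mathcal{S}^p$ coefficient from Lemma $14$ computed for the specialized operator. Vanishing of the coefficient would require
\[F(k)\,\mathcal{S}^{p}(u^{i+1})=u^{i+1}\,\mathcal{S}^{-N_{i+1}}(F(k))\]
plus corrections from lower-order terms of $L^{-qk}_+$ paired with $u^1,\dots,u^i$.

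\textbf{Main obstacle.} These corrections are the crux. In Lemma $9$ they were absent; here the spacing before the gap is $p$, so the pairs $(\mathcal{S}^{jp}\text{-coefficient},u^{j'})$ with the right total exponent must be tracked. I would first try a filtration argument as in Case 1 of Lemma $9$: filter by the degree in $u^{i+1}$. The degree-one part in $u^{i+1}$ isolates the displayed equation, with $F(k)$ replaced by its $u^{i+1}$-free part. That part is nonzero, since Lemma $14$ applied to the operator with only $u^1,\dots,u^i$ still gives nonvanishing when $N_i-N$ has the appropriate form.

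The displayed equation then forces $u^{i+1}$ to divide $F(k)$ repeatedly, by the divisibility argument of Lemma $5$. This confines $F(k)$ to a finite-dimensional space, contradicting the grading $d(F(k))=-qkN-p$ growing with $k$. If the $u^{i+1}$-free part turns out to vanish, the fallback is to choose instead the exponent $-N_{i+1}+jp$ with $j$ minimal such that the corresponding coefficient of $L^{-qk}$ is nonzero, again via the semigroup argument of Lemma $14$.
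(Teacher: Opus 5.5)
Your Steps 1 and 2, and the setup of Step 3 (specialize $u^{i+2}=\dots=u^M=0$ and test the coefficient of $\mathcal{S}^{p-N_{i+1}}$ against the nonvanishing $\mathcal{S}^p$-coefficient from Lemma 14), are the paper's argument. The ``main obstacle'' you raise does not exist. A one-line exponent count should replace the filtration discussion.

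A term $a\,\mathcal{S}^e$ of $L^{-qk}_+$ has $e\ge 0$. Pair it with $u^{j}\mathcal{S}^{-N_j}$ for $0\le j\le i$ (with $u^0=1$, $N_0=N$). It lands on $\mathcal{S}^{p-N_{i+1}}$ only if $e=p-(N_{i+1}-N_j)$, which is negative because $N_{i+1}-N_j\ge N_{i+1}-N_i>p$. Pairings with $u^j$ for $j\ge i+2$ are linear in shifts of $u^j$, so they vanish under the specialization.

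Hence the coefficient is exactly $\bar F(k)\,\mathcal{S}^p(u^{i+1})-u^{i+1}\mathcal{S}^{-N_{i+1}}(\bar F(k))$, as in the paper. Here $\bar F(k)\neq 0$ by Lemma 14 applied to the truncated operator, whose gap gcd is still $p$ because $N_1=N+p$. The divisibility chain of Lemma 5 then never terminates, since $p>0>-(n+1)N_{i+1}$ for every $n\ge 0$. So vanishing of this coefficient would force $\bar F(k)=0$ outright. No finite-dimensionality, grading, or fallback argument is needed.

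Your assertion that the degree-one part in $u^{i+1}$ ``isolates'' the displayed equation would be unjustified if the corrections you posited were really present, because their degree-one parts would mix in. It is the exponent count, not the filtration, that closes the argument.

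Two small slips:
\begin{itemize}
\item The degree of $F(k)$ is $qkN-p$, not $-qkN-p$.
\item Nonzero constants are $\mathcal{S}^{-N}$-invariant. In Step 2 you should note that $G(k)$ is homogeneous of positive degree $qkN+p$ before concluding $(1-\mathcal{S}^{-N})(G(k))\neq 0$.
\end{itemize}
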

\begin{proof}
    If $k$ is not divisible by $q$, then the support of $[L^{-k}_+,L]$ is orthogonal to the support of $L$. Hence we must have infinitely many integers in $X$ that are divisible by $q$.
    For such an integer, the leading term of $[(L^{-qk})_{-},L]$ is $(1-\mathcal{S}^{-N})(G(k)) \mathcal{S}^{-N-p}$. This is nonzero for $k$ large enough by Lemma $14$. Since $qk \in X$ and 
    \[
    [(L^{-qk})_{-},L]=-[(L^{-qk})_{+},L],
    \]
    we must have $-N-p$ in the support of $L$. In other words, 
    \[ N_1=N+p.\]
    Now, suppose that for some $1 \leq i <m$, we have 
    \[ N_j=N+jp \text{  for all  } j \leq i \, , \, \, N_{i+1} > N+(i+1)p.\]
    Then, consider the operator
    \[ \bar{L}=\mathcal{S}^{-N}+u_1\mathcal{S}^{-N_1}+...+u_{i+1}\mathcal{S}^{-N_{i+1}}.\]
    Applying Lemma $14$ to $\bar{L}$, we obtain $\bar{F}(k) \neq 0$ for sufficiently large $k$. Therefore, for large $k$, the coefficient of $\mathcal{S}^{p-N_{i+1}}$ is nonzero in 
    \[ [({\bar{L}}^{-qk})_{-},\bar{L}]=-[({\bar{L}}^{-qk})_{+},\bar{L}].\]
    This implies that for large $k$ the coefficient of $\mathcal{S}^{p-N_{i+1}}$ is nonzero in 
    \[ [(L^{-qk})_{-},L].\]
    Hence $p-N_{i+1}$ must be in the support of $L$ which is a contradiction. We have shown that 
    \[ supp(L)=\{-N,-N-p,...,-N-mp\}.\]
\end{proof}
\begin{proof} (\emph{of Theorem $10$})
    By Lemmas $13$ and $15$, the support of $L$ must be arithmetic. Conversely, the evolutionary derivations \eqref{eqeqeqeq} are well-defined and commute by a standard argument. For each pair of integers $k,k' \geq 1$, $\partial_k(h_{k'})$ lies in the image of $\mathcal{S}-1$ by Lemma $3$ since it is the residue of a commutator.
\end{proof}

\begin{example}
The Beffa-Wang lattice is associated with the lattice $W_m$-algebra for $m >2$ (\cite{BW13}, \cite{HI97}). Its first flow is given by 
\[
\begin{cases}
v_{1,t}={v_1}\left( \mathcal{S}^{-1}-\mathcal{S}^{m-1}\right)(v_2),\\[0.4em]
v_{i,t}=\left(1-\mathcal{S}^{-1}\right) (v_{i+1})
+v_i\left( \mathcal{S}^{-1}-\mathcal{S}^{i-1}\right) (v_2),
\qquad i=2,3,\ldots,m-1
\end{cases}
\]
where we use the convention $v_m=v_1$ following the notations of \cite{DSKW18}. For $m=3$, this reduces to the Belov-Chaltikian lattice \cite{BC93}. Theorem $10$ shows that the arithmetic support $(-1,1,m-1)$ naturally gives rise to a hierarchy based upon the scalar difference operator 
\begin{equation}\label{eqeq}
L=\mathcal{S}^{-1}+v_2 \mathcal{S}^{-2}+\mathcal{S}^{-1}(v_3) \mathcal{S}^{-3}+...+\mathcal{S}^{3-m}(v_{m-1}) \mathcal{S}^{1-m}+\mathcal{S}^{-m-2}(v_{1}) \mathcal{S}^{-m}.
\end{equation} 
A direct computation gives
\[ (L^{-1})_+=\mathcal{S}-\mathcal{S}(v_2).\]
and shows that the corresponding Lax equation
\begin{equation}\label{e}
L_t=[(L^{-1})_+,L]
\end{equation}
coincides with the Beffa-Wang equation. For example, taking the $\mathcal{S}^{-i}$ component of equation \eqref{e} for a generic $i$ reads 
\[ \mathcal{S}^{2-i}(v_{i,t})=(\mathcal{S}^{2-i}-\mathcal{S}^{1-i})(v_{i+1})+\mathcal{S}^{2-i}(v_i)(\mathcal{S}^{1-i}-\mathcal{S})(v_2).\]
Therefore the Beffa-Wang equation extends naturally to the infinite hierarchy of commuting flows 
\[\partial_k(L)=[(L^{-k})_+,L], \, \, k \geq 1\]
provided by Theorem $10$. Moreover, it possesses infinitely many conserved densities
\[h_k= \text{ Res } L^{-k}, \, \, k \geq 1.\]
Theorem $10$ shows that the Beffa-Wang hierarchy naturally belongs to the class of arithmetic-support hierarchies. In particular, the Beffa-Wang hierarchy admits the scalar difference Lax representation \eqref{eqeq}, which does not appear to have been previously observed in the literature. 
\end{example}

\section{Conclusion}
In this paper, we proved that monic finite-band scalar difference operators with independent coefficients admit infinitely many support-preserving Lax flows if and only if their support is arithmetic. Consequently, arithmetic supports provide a complete classification of finite-band scalar Lax hierarchies in the generic setting considered here. Every such support gives rise to an infinite hierarchy of local commuting flows and infinitely many conserved densities.

Finite-band scalar Lax hierarchies with independent coefficients are therefore parametrized by three integers $(N,p,m)$, corresponding respectively to the leading order, the common size of the gaps in the support, and the number of generators. This places a variety of classical integrable systems, including the Volterra, Toda and Narita--Itoh--Bogoyavlensky hierarchies, within a common arithmetic framework \cite{K85,Y06}. It also complements the modern theory of integrable difference operators developed in \cite{AP11}. At the same time, the theorem produces infinitely many additional integrable hierarchies associated with arithmetic supports that do not appear to have been previously studied.

Several natural questions remain open. First, it would be interesting to determine whether analogous rigidity phenomena persist for matrix-valued or rational difference operators. In view of the rich theory of rational difference operators and their integrable hierarchies \cite{AP11}, this appears to be a particularly natural direction for further investigation. Second, the Hamiltonian structures associated with the hierarchies constructed here remain to be understood. Finally, our proof of the nonvanishing result relies on the assumption that the coefficients are algebraically independent. We expect that an analogous statement remains valid after specializing all generators to $1$, but we do not currently know a proof.

\end{document}